\documentclass[12pt]{amsart}
\usepackage[margin=1in]{geometry}
\usepackage{amssymb,amsmath,amsthm,bbold,mathtools,bm} 
\def\-{\raisebox{.75pt}{-}}
\usepackage{enumitem}
\usepackage{comment}
\usepackage{tikz} 
\usetikzlibrary{cd}
\usetikzlibrary{arrows}

\numberwithin{table}{section}
\numberwithin{equation}{section}

\newtheorem{defn}{Definition}[section]

\newtheorem{rmk}{Remark}[section]
\newtheorem{lem}{Lemma}[section]
\newtheorem{prop}{Proposition}[section]
\newtheorem{thm}{Theorem}[section]
\newtheorem*{thm*}{Theorem}

\theoremstyle{plain}

\newcommand{\fh}{\mathfrak{h}}

\newcommand{\fl}{\mathfrak{l}}

\newcommand{\sE}{\mathcal{E}}
\newcommand{\sF}{\mathcal{F}}

\newcommand{\sH}{\mathcal{H}}

\newcommand{\R}{\mathbb{R}}
\newcommand{\C}{\mathbb{C}}
\newcommand{\Hq}{\mathbb{H}}
\newcommand{\Z}{\mathbb{Z}}

\newcommand{\Q}{\mathbb{Q}}

\newcommand{\U}{\mathrm{U}}

\newcommand{\rad}{\mathrm{rad}}
\newcommand{\Tor}{\mathrm{Tor}}

\newcommand{\vol}{\mathrm{vol}}
\newcommand{\covol}{\mathrm{covol}}

\newcommand{\Ber}{\mathrm{Ber}}

\usepackage{hyperref,cleveref}
\usepackage[alphabetic,initials]{amsrefs}

\title{Higher abelian gauge theory in BV formalism}
\author{Shuhan Jiang}
\address{
	Department of Mathematics, University of Zurich, Winterthurerstrasse 190, CH-8057 Zurich, Switzerland
}
\email{shuhan.jiang@math.uzh.ch}

\begin{document}
	\begin{abstract}
		A BV structure is constructed on the space of zero modes of a quadratic elliptic higher abelian gauge theory. The associated BV integral is shown to be independent of the choice of zero-mode Lagrangian under mild hypotheses. The construction yields explicit zero-mode partition functions for several $p$-form abelian gauge theories.
	\end{abstract}
	\maketitle
	
	\section{Introduction}

In this paper, we study the zero-mode partition function of a quadratic $p$-form abelian gauge theory on a closed oriented Riemannian manifold $(X,g)$. A global description of such a theory must retain not only the differential-form fields visible perturbatively, but also their integral and torsion sectors. We therefore formulate the classical fields in Cheeger--Simons differential cohomology \cite{CheegerSimons}, with field space $\check H^{p+1}(X)$ and a quadratic action
\[
S:\check H^{p+1}(X)\longrightarrow \C/i\Z.
\]

Differential cohomology has been used extensively to formulate higher abelian gauge theories globally \cite{BelovMoore, FreedMooreSegal,Szabo2012,BeckerBeniniSchenkelSzabo,MooreSaxena}. A Batalin--Vilkovisky (BV) formulation of such a theory can be described using complexes of sheaves; see \cite{ElliottGwilliamSaberiWilliams}. We briefly recall the relevant construction. A cochain complex modeling $\check H^{p+1}(X)$ is the $(p+1)$-st smooth Deligne complex
\[
\mathrm{Bund}_\nabla(p+1)
\coloneqq
\Z[p+1]
\longrightarrow
\Omega^0[p]
\xlongrightarrow{d}\cdots
\xlongrightarrow{d}
\Omega^p;
\]
see \cite{Brylinski}. Its cotangent complex is represented by
\[
\Omega_{\mathrm{cl}}^{n-p}
\coloneqq
\Omega^{n-p}
\xlongrightarrow{d}\cdots
\xlongrightarrow{d}
\Omega^n[-p].
\]
For a local quadratic action, the equation-of-motion map is induced by a morphism of complexes
\[
K:\mathrm{Bund}_\nabla(p+1)
\longrightarrow
\Omega_{\mathrm{cl}}^{n-p}.
\]
The BV complex is the $(-1)$-shifted mapping cone
\[
\sF\coloneqq\operatorname{Cone}(K)[-1],
\]
and its hypercohomology $\mathbb H^\bullet(X;\sF)$ is the space of zero-modes of the theory.

The hypercohomology $\mathbb H^\bullet(X;\sF)$ is naturally a collection of abelian Lie groups. Its Lie algebra records the perturbative zero modes, while its exponential lattices and component groups retain the topological data. The canonical shifted symplectic structure and the Riemannian metric induce compatible perturbative BV data, and the classical action restricts to a quadratic weight on the degree-zero component group. We package these structures in the notion of an \emph{abelian BV group}.

For a maximal admissible Lagrangian $L\subset\mathbb H^\bullet(X;\sF)$, we define a finite-dimensional BV integral $Z_{\mathbb H(X;\sF)}(L)$. Our main result is the following.
\begin{thm*}
	Under mild hypotheses, the zero-mode partition function $Z_{\mathbb H(X;\sF)}(L)$ is independent of the choice of maximal admissible Lagrangian $L$.
\end{thm*}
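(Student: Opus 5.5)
The plan is to reduce the independence statement to two separate invariance results: one for the perturbative (Lie algebra) part and one for the finite topological part (exponential lattices and component groups). The abelian BV group $\mathbb H^\bullet(X;\sF)$ splits, non-canonically, as a product of a torus-type part $\mathfrak h/\Lambda$ and finite and discrete component groups. A maximal admissible Lagrangian $L$ should accordingly decompose into a Lagrangian subspace $\fl\subset\fh$ for the $(-1)$-shifted pairing, compatible with the lattice, together with Lagrangian data on the component groups. I would first record this decomposition, and then show that $Z_{\mathbb H(X;\sF)}(L)$ factorizes as a product of three pieces:
\begin{itemize}
\item a perturbative Berezinian volume factor determined by $\fl$ and the metric-induced density on $\fh$;
\item a covolume factor $\covol(\Lambda\cap \fl)$ coming from the exponential lattices;
\item a finite Gauss-type sum of the quadratic weight $e^{-S}$ over the degree-zero component group, normalized by the orders of the torsion groups in the other degrees.
\end{itemize}

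For the perturbative factor I will use the standard finite-dimensional BV argument. Any two admissible Lagrangians $\fl_0,\fl_1$ in the graded vector space $\fh$ can be joined by a path of Lagrangians. In a Darboux frame such a path is generated by a gauge-fixing fermion $\Psi$, and the integrand is BV-closed because the action is quadratic and the half-density is the canonical one induced from the metric. The perturbative integral therefore changes by $\int_{\fl_t}\Delta(\cdots)=0$. Concretely, I would compute the integral as the Berezinian of the restriction of the quadratic form to $\fl$, times the half-density, and check that the ratio for $\fl_0$ and $\fl_1$ is $1$ using $\Ber$ multiplicativity along the symplectic pairing between $\fl$ and a complement.

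The interplay with the lattices is where the hypotheses enter. Replacing $\fl$ changes both the Berezinian normalization and $\covol(\Lambda\cap\fl)$, and the compensation works only because the symplectic pairing is unimodular on $\Lambda$. This unimodularity follows from Poincaré duality of the Deligne complex, which pairs $\Z$-lattices perfectly modulo torsion. I expect this to be the main obstacle: one must show that the ratio of covolumes equals the inverse ratio of the densities. I would first try a Smith normal form argument for the pairing matrix restricted to $\Lambda\cap\fl$ and its dual, reducing to the rank-one case. In that case the statement is the identity $\covol(\Lambda\cap \fl)\,\covol(\Lambda\cap\fl^\perp)=\covol(\Lambda)=1$.

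For the discrete part, I would use the torsion linking pairing on $\Tor\,\mathbb H$. Different Lagrangian choices correspond to different isotropic subgroups $A$ with $|A|^2=|\Tor|$, and the Gauss sum $\sum_{a\in A}e^{-S(a)}$ normalized by $|A|^{-1/2}$ is independent of $A$ precisely when the quadratic weight refines the linking form. This is the mild hypothesis I expect to need, for instance vanishing of the relevant characteristic or Arf-type obstruction. Assembling the three factors then gives the claimed independence of $L$.
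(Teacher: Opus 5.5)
Your treatment of the discrete data rests on a notion of Lagrangian that is not the one used here. A Lagrangian constrains only its Lie algebra $\fl$. Its component groups are arbitrary subgroups with $\pi_0(L^k)\hookrightarrow\pi_0(H^k)$: there is no isotropy condition for a linking form, and $Z_H(L)$ contains no $|A|^{-1/2}$ normalization. What fixes the discrete factors is \emph{maximality}. For $k\neq0$, admissibility makes $\pi_0(L^k)$ finite, hence contained in $\Tor\bigl(\pi_0(H^k)\bigr)$, and maximality forces equality. The component counts in the even and odd integrals therefore combine to the $L$-independent number $T(H)=\prod_{k\neq0}T_k^{(-1)^k}$. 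In degree zero, $\pi_0(H^0)$ need not be finite (for Maxwell it is all of $H^{p+1}(X;\Z)$), so $\Theta(L)$ is a theta series rather than a finite Gauss sum. The hypothesis that makes it $L$-independent is positive definiteness of $\Re(w)$ on $\pi_0(H^0)/\Tor\bigl(\pi_0(H^0)\bigr)$, because then maximality forces $\pi_0(L^0)=\pi_0(H^0)$. Without it, different maximal $L$ can select different positive sublattices of $\pi_0(H^0)$ with different theta sums. No Arf-type condition or Gauss-sum invariance argument enters or would repair this.

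The perturbative step also fails as proposed. The weight $w$ lives on $\pi_0(H^0)$, so the integrand is constant on $L_0$ and there is no Gaussian Berezinian to compare. The entire contribution is $\vol(L_0)=\covol(\Gamma\cap\fl)$, computed with $\kappa|_{\fl}$. Compactness of $L_0$ forces $\fl\subset\Gamma_\R$ with $\Gamma\cap\fl$ full in $\fl$. Such rational Lagrangians form a countable set, so any continuous path through admissible Lagrangians is constant, and there is no deformation along which a gauge-fixing-fermion argument could run. Everything therefore rests on the lattice identity you flag as the main obstacle and leave unproved, and your sketch of it needs two corrections.

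First, $\Gamma$ is generally not full: for Maxwell it is Lagrangian and $H^k$ with $k>0$ is noncompact, so $H_0$ is not a torus $\fh/\Lambda$. The paper needs $\Gamma$ coisotropic and observes $\Gamma_\R^{\perp_\omega}\subset\fl\subset\Gamma_\R$. It then uses $0\to\Gamma\cap\Gamma_\R^{\perp_\omega}\to\Gamma_L\to\underline\Gamma_L\to0$ to split off an $L$-independent covolume factor.

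Second, unimodularity does not follow from Poincar\'e duality: for Chern--Simons with $|k|>1$ one has $\Gamma^j_{\mathrm{CS}}=k\Lambda^{p+j}$ for $j>0$. So reduced unimodularity must be assumed. Given it, the right identity is not $\covol(\Lambda\cap\fl)\covol(\Lambda\cap\fl^\perp)=\covol(\Lambda)=1$. The intersection with $\fl^\perp$ is typically not full, and the graded covolume of $\Gamma$ is $({\det}_\Gamma\kappa)^{1/2}$, which is in general not $1$. Instead, apply to the reduced lattice $\underline\Gamma$ the exact sequence $0\to\Gamma_L\to\Gamma\to\Gamma_L^\vee[-1]\to0$ together with the isometry condition on $\omega^\flat$. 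This gives $\covol(\Gamma_L)^2=\covol(\Gamma)$ in the graded sense, since the shift $[-1]$ flips parity and compensates the inversion under duality. No Smith normal form reduction is needed.
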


Explicitly, the zero-mode partition function consists of a graded volume of connected zero-modes, a finite torsion factor, and a theta sum over degree-zero topological sectors. It captures the global finite-dimensional contribution to the partition function, separately from the regularized determinants of nonzero modes familiar from quadratic functionals and analytic torsion \cite{RaySinger1971,Schwarz1978}.

We work out the construction for Maxwell, abelian Chern--Simons, and Maxwell--Chern--Simons theories. In particular, for $p$-form Maxwell theory, our zero-mode formula is similar in form to that of \cite{MooreSaxena}, but differs in its torsion factor. In a companion paper \cite{JiangLiZabzine}, the formulas developed here are used in a BV-theoretic analysis of abelian duality.

\subsection*{Acknowledgments}

The author thanks Owen Gwilliam for valuable comments and suggestions that improved the manuscript, and Vivek Saxena for a helpful discussion.

The author acknowledges partial support of the SNF Grant No. 200021 227719 and of the Simons Collaboration on Global Categorical Symmetries. This research was (partly) supported by the NCCR SwissMAP, funded by the Swiss National Science Foundation. This article is based upon work from COST Action 21109 CaLISTA, supported by COST (European Cooperation in Science and Technology) (www.cost.eu), MSCA-2021-SE-01-101086123 CaLIGOLA, and MSCA-DN CaLiForNIA-101119552.

\section{Abelian BV groups}

In this section, we introduce an abstract BV structure on a collection of abelian Lie groups, modeled on the global zero modes of higher abelian gauge theories.

\subsection{Preliminaries}

Let $V = \bigoplus_{k\in \Z} V^k$ be a finite-dimensional graded vector space.

\begin{defn}
	A \emph{$(-1)$-shifted symplectic pairing} on $V$ is a nondegenerate graded skew-symmetric pairing
	\[
	\omega \colon V \times V \longrightarrow \R[-1].
	\]
	
	Graded skew-symmetry means that $\omega(V^k,V^\ell)=0$ unless $k+\ell=1$, and
	\[
	\omega(x,y)=-(-1)^{(k-1)(\ell-1)}\omega(y,x).
	\]
	for $x \in V^k$ and $y \in V^\ell$. 
	
	Nondegeneracy means that the map
	\[
	\omega^\flat \colon V \xlongrightarrow{\sim} V^\vee[-1],
	\qquad x\longmapsto\omega(x,-),
	\]
	is an isomorphism.
\end{defn}

For a graded subspace $W\subset V$, let
\[
W^{\perp_\omega} \coloneq \{x\in V\mid \omega(x,W)=0\}.
\]
We call $W$ \emph{isotropic}, \emph{coisotropic}, or \emph{Lagrangian} if, respectively,
\[
W\subset W^{\perp_\omega},
\qquad
W^{\perp_\omega}\subset W,
\qquad
W=W^{\perp_\omega}.
\]

\begin{defn}
	A (graded) lattice $\Gamma \subset V$ is called \emph{integral} if
	\[
	\omega(\Gamma,\Gamma)\subset\Z.
	\]
	It is called \emph{full} if
	\[
	\Gamma_\R \coloneqq  \Gamma\otimes_{\Z}\R \cong V.
	\]
    Its \emph{radical} is
	\[
	\rad(\Gamma)\coloneq\{\gamma\in\Gamma\mid \omega(\gamma,\Gamma)=0\}.
	\]
\end{defn}

An integral lattice is \emph{coisotropic} if $\Gamma_\R$ is coisotropic. In this case, $\rad(\Gamma)$ is a full lattice in $\Gamma_\R^{\perp_\omega}$, and $\omega$ induces a nondegenerate integral pairing $\underline\omega$ on
\[
\underline\Gamma \coloneq \Gamma/\rad(\Gamma).
\]
We call a coisotropic lattice $\Gamma$ \emph{reduced unimodular} if
\[
\underline \omega^\flat:
\underline\Gamma\xlongrightarrow{\sim}
\underline\Gamma^\vee[-1]
\]
is an isomorphism.

Let $A$ be a finitely generated abelian group. 

\begin{defn}
A complex \emph{quadratic weight} on $A$ is a map
\[
w:A\longrightarrow\C/i\Z
\]
such that
\begin{itemize}
	\item $w(x+y)-w(x)-w(y)$ is biadditive;
	\item $\Re\bigl(w(x)\bigr)=\Re\bigl(w(-x)\bigr)$.
\end{itemize}
\end{defn}

Because the period group $i \Z \subset \C$ is purely imaginary, the function
\[
q(x)\coloneq\Re\bigl(w(x)\bigr)
\]
is a well-defined real quadratic weight on $A$.

\begin{lem}\label{lem:real-quadratic}
The function $q$ vanishes on $\Tor(A)$ and descends uniquely to a real quadratic weight
\[
\bar q:A_{\mathrm{free}}\coloneq A/\Tor(A)\longrightarrow\R
\]
satisfying $\bar q(nx)=n^2\bar q(x)$ for every $n\in\Z$.
\end{lem}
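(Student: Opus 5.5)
The plan is to extract from the two axioms of a complex quadratic weight the basic identities satisfied by $q=\Re(w)$, and then use finiteness of torsion orders to kill $q$ on $\Tor(A)$.

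\emph{Step 1: the associated bilinear form.} Set
\[
b(x,y)\coloneq q(x+y)-q(x)-q(y)=\Re\bigl(w(x+y)-w(x)-w(y)\bigr).
\]
By the first axiom, $w(x+y)-w(x)-w(y)$ is biadditive with values in $\C/i\Z$. Taking real parts is a homomorphism $\C/i\Z\to\R$, so $b$ is a biadditive, symmetric, real-valued form on $A$.

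\emph{Step 2: the scaling law.} Setting $x=y=0$ in the definition of $b$ gives $q(0)=-b(0,0)=0$. Induction on $n\ge 0$ gives
\[
q(nx)=q((n-1)x)+q(x)+(n-1)b(x,x),
\]
hence $q(nx)=n\,q(x)+\binom n2 b(x,x)$. Next, $0=q(x+(-x))=q(x)+q(-x)+b(x,-x)$, and the second axiom gives $q(-x)=q(x)$. Together these yield $2q(x)=b(x,x)$. Substituting back,
\[
q(nx)=n\,q(x)+n(n-1)q(x)=n^2q(x)\qquad\text{for } n\ge 0,
\]
and evenness of $q$ extends this to all $n\in\Z$. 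The second axiom is used exactly here; without it, $q$ could contain a linear part.

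\emph{Step 3: vanishing on torsion and descent.} If $t\in\Tor(A)$ with $mt=0$ for some $m\ge1$, then $0=q(0)=q(mt)=m^2q(t)$, so $q(t)=0$. Biadditivity gives $m\,b(x,t)=b(x,mt)=0$, so $b(x,t)=0$ for every $x\in A$. Therefore
\[
q(x+t)=q(x)+q(t)+b(x,t)=q(x),
\]
so $q$ is constant on cosets of $\Tor(A)$ and descends to a function $\bar q$ on $A_{\mathrm{free}}$. Uniqueness is immediate because $A\to A_{\mathrm{free}}$ is surjective. The form $b$ also descends, so $\bar q$ is again a real quadratic weight, and it satisfies $\bar q(nx)=n^2\bar q(x)$ by Step 2.

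The only step with any content is Step 2, namely deriving $2q(x)=b(x,x)$ from evenness. The rest is formal bookkeeping.
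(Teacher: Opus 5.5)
Your proof is correct and follows essentially the same route as the paper. Both arguments derive $b(x,x)=2q(x)$ from evenness and biadditivity, obtain $q(nx)=n^2q(x)$, and then use $m^2q(t)=0$ and $m\,b(x,t)=0$ on torsion to descend; the only cosmetic difference is that the paper gets the scaling law in one line as $q(nx)=b(nx,nx)/2=n^2q(x)$ (valid for all $n\in\Z$ at once), where you use an induction.
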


\begin{proof}
Let $b(x,y)\coloneqq q(x+y)-q(x)-q(y)$. Since $q(-x)=q(x)$, the identity $0=q(x-x)=2q(x)-b(x,x)$ gives $b(x,x)=2q(x)$. Since $2 \in \R$ is invertible, $q(nx)= b(nx,nx)/2=n^2q(x)$. If $t$ is torsion and $nt=0$, then $0=q(nt)=n^2q(t)$, so $q(t)=0$. Moreover $n b(x,t)=b(x,nt)=0$, and therefore $b(x,t)=0$. Thus $q(x+t)=q(x)$, proving that $q$ descends to $A_{\mathrm{free}}$.
\end{proof}

\subsection{BV structures}

Let $H=\{H^k\}_{k\in\Z}$ be a collection of abelian Lie groups, trivial in all but finitely many degrees, such that each Lie algebra $\fh^k$ of $H^k$ is finite-dimensional and each component group $\pi_0(H^k)$ is finitely generated.

Write
\[
\fh \coloneq \bigoplus_{k\in\Z}\fh^k,
\qquad
\Gamma^k \coloneq \ker\bigl(\exp:\fh^k\to H^k\bigr),
\qquad
\Gamma \coloneq \bigoplus_k\Gamma^k.
\]
For every $k$ there is a canonical exact sequence
\begin{equation}\label{eq:exp-sequence}
	0\longrightarrow \Gamma^k
	\longrightarrow \fh^k
	\xlongrightarrow{\exp} H^k
	\longrightarrow \pi_0(H^k)
	\longrightarrow 0.
\end{equation}
Let $G_0$ denote the identity component of an abelian Lie group $G$. Then the real span $\Gamma^k_\R$ of $\Gamma^k$ can be identified with the Lie algebra of the maximal compact subgroup of $H_0^k$. In particular, $H_0^k$ is compact if and only if $\Gamma^k$ is a full lattice in $\fh^k$.

We set
\[
H^{\mathrm{even}} \coloneq \prod_{k\in2\Z}H^k,
\qquad
H^{\mathrm{odd}} \coloneq \prod_{k\in2\Z+1}H^k,
\]
and use the same convention for $\fh$, $\Gamma$, and subgroups of $H$.

\begin{defn}\label{def:abelian-BV-group}
A \emph{BV structure} on $H$ is a triple $(\omega,\kappa,w)$, where:
\begin{itemize}[leftmargin=2em]
\item $\omega$ is a $(-1)$-shifted symplectic pairing on $\fh$;
\item $\kappa = \bigoplus \kappa^k$ is a direct sum of inner products $\kappa^k$ on $\fh^k$ for which
\[
\omega^\flat:(\fh,\kappa)\xlongrightarrow{\sim}(\fh^\vee[-1],\kappa^\vee[-1])
\]
is an isometry;
\item $w$ is a complex quadratic weight on $\pi_0(H^0)$. 
\end{itemize}
We call $H$, equipped with such a BV structure, an \emph{abelian BV group}.
\end{defn}

We call an abelian BV group $H$ \emph{integral} if its exponential lattice $\Gamma$ is integral with respect to the $(-1)$-shifted symplectic pairing $\omega$. An integral abelian BV group is called \emph{integrable} if $\Gamma$ is, in addition, coisotropic.

\begin{defn}
	A \emph{Lagrangian} of $H$ is a family of closed subgroups
	\[
	L=\{L^k\subset H^k\}_{k\in\Z}
	\]
	such that the total Lie algebra $\fl\subset\fh$ is Lagrangian and the induced map
	\[
	\pi_0(L^k)\longrightarrow\pi_0(H^k)
	\]
	is injective for all $k\in\Z$.
\end{defn}

A Lagrangian $L$ is called \emph{admissible} if each $L_0^k$ is compact, $\pi_0(L^k)$ is finite for $k\neq 0$, and
\[
\Re(w)\big|_{\pi_0(L^0)}
\]
descends to a positive-definite quadratic form on $\pi_0(L^0)/\Tor(\pi_0(L^0))$. 

\begin{lem}
	Every integrable abelian BV group $H$ has an admissible Lagrangian.
\end{lem}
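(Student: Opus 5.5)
The idea is to look for an admissible Lagrangian that is \emph{connected} in every degree, $L^k=L^k_0$. Then $\pi_0(L^k)$ is trivial for all $k$. Injectivity of $\pi_0(L^k)\to\pi_0(H^k)$ and finiteness for $k\neq0$ hold automatically. The positivity condition on $\Re(w)|_{\pi_0(L^0)}$ is vacuous, since $\pi_0(L^0)/\Tor(\pi_0(L^0))=0$ and the zero form on the zero group counts as positive-definite. So everything reduces to finding a Lagrangian subspace $\fl\subset\fh$ such that each $\exp(\fl^k)\subset H^k$ is a compact closed subgroup. This holds if $\fl^k\subset\Gamma^k_\R$ and $\fl^k\cap\Gamma^k$ is a full lattice in $\fl^k$, since then $\exp(\fl^k)\cong\fl^k/(\fl^k\cap\Gamma^k)$ is a torus. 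We then set $L^k\coloneq\exp(\fl^k)$, whose Lie algebra is $\fl^k$.

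To produce $\fl$, I would use the integrable hypothesis that $\Gamma$ is integral and coisotropic. Then $N\coloneq\Gamma_\R^{\perp_\omega}\subset\Gamma_\R$, and $\rad(\Gamma)$ is a full lattice in $N$. The quotient $\underline V\coloneq\Gamma_\R/N$ carries the nondegenerate pairing $\underline\omega$ and contains the full lattice $\underline\Gamma$. Since $\omega$ pairs $V^k$ only with $V^{1-k}$, and $k\neq1-k$ for integer $k$, no graded piece pairs with itself. Hence
\[
\underline W\coloneq\bigoplus_{k\le0}\underline V^k
\]
is isotropic. It is Lagrangian in $\underline V$ because nondegeneracy identifies $\underline V^{1-k}\cong(\underline V^k)^\vee$. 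Moreover $\underline W$ is spanned by the images of lattice vectors, namely $\underline\Gamma^k$ for $k\le0$.

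Next I would take $\fl\coloneq$ the preimage of $\underline W$ in $\Gamma_\R$. Concretely, $\fl^k=\Gamma^k_\R$ for $k\le0$ and $\fl^k=N^k$ for $k\ge1$. To see that $\fl$ is Lagrangian in $\fh$, note that $N\subset\fl\subset\Gamma_\R$ gives $\fl^{\perp_\omega}\subset N^{\perp_\omega}=\Gamma_\R$, using finite-dimensionality and nondegeneracy on $\fh$. Hence $\fl^{\perp_\omega}$ is the preimage of $\underline W^{\perp_{\underline\omega}}=\underline W$, i.e. $\fl^{\perp_\omega}=\fl$. This is the standard coisotropic reduction argument, and I expect it to be the step needing the most care with the graded conventions. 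Rationality is clear: $\fl^k\cap\Gamma^k$ is either $\Gamma^k$ or $\rad(\Gamma)^k$, and both are full lattices in $\fl^k$ by the coisotropy remark. Therefore $L^k=\exp(\fl^k)$ is a compact torus in $H^k_0$, and $L=\{L^k\}$ is an admissible Lagrangian.
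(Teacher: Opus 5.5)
Your proof is correct and follows the paper's argument. You take the preimage in $\Gamma_\R$ of a rational Lagrangian in the coisotropic reduction $\Gamma_\R/\rad(\Gamma)_\R$ and exponentiate it to a connected compact torus, which makes admissibility automatic. The one difference is that you write down the rational Lagrangian explicitly as $\bigoplus_{k\le0}\underline V^k$, which is isotropic because $k+k'\le 0\neq 1$; the paper only asserts that such a Lagrangian can be chosen.
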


\begin{proof}
	Choose a Lagrangian subspace $\underline{\fl}$ of the reduced symplectic space $\Gamma_\R/\rad(\Gamma)_\R$ rational with respect to $\underline\Gamma = \Gamma/\rad(\Gamma)$, and let $\fl \subset\Gamma_\R$ be its inverse
	image. Then $\fl$ is Lagrangian in $\fh$ and $\Gamma\cap\fl$ is a full lattice
	in $\fl$. Hence
	\[
	L \coloneq \exp(\fl)\subset H_0
	\]
	is a compact connected Lagrangian. In particular, $\pi_0(L)=0$, so $L$ is admissible.
\end{proof}

Henceforth, we assume that $H$ is integrable.

\begin{lem}\label{lem:integrable-admissible}
	A Lagrangian $L$ is admissible if and only if
	\[
	\int_{L^{\mathrm{even}}} \left|\exp(-2\pi w)\right|\,\mu_{\mathrm{even}}<\infty,
	\qquad
	\int_{L^{\mathrm{odd}}}1\ \mu_{\mathrm{odd}}<\infty,
	\]
	where $\mu_{\mathrm{even}}$ and $\mu_{\mathrm{odd}}$ denote the positive densities on $L^{\mathrm{even}}$ and $L^{\mathrm{odd}}$ induced by $\kappa$.
\end{lem}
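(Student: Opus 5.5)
The plan is to decompose each integral over the connected components of $L^k$ and read off the conditions from the structure of abelian Lie groups. Each $L^k$ is a closed subgroup of an abelian Lie group, hence itself an abelian Lie group. Its identity component is $L^k_0\cong T^{a}\times\R^{b}$, with torus part having Lie algebra $(\Gamma^k\cap\fl^k)_\R$. Its component group $\pi_0(L^k)$ injects into the finitely generated group $\pi_0(H^k)$, so it is finitely generated. The density induced by $\kappa$ is translation invariant, so every component of $L^k$ carries the same volume, equal to $\vol(L^k_0)$. This volume is finite if and only if $b=0$, i.e.\ if and only if $L^k_0$ is compact. The products $L^{\mathrm{even}}$ and $L^{\mathrm{odd}}$ carry the product densities, and these factorize over $k$.

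\emph{Odd part.} The integral $\int_{L^{\mathrm{odd}}}\mu_{\mathrm{odd}}$ equals $\prod_{k\text{ odd}}|\pi_0(L^k)|\,\vol(L^k_0)$. It is finite if and only if, for each odd $k$, $L^k_0$ is compact and $\pi_0(L^k)$ is finite. Here we use that each factor is nonzero, since the Haar volume of an open set is positive.

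\emph{Even part.} The weight $w$ is defined on $\pi_0(H^0)$, and we regard it as a function on $L^0$ through $L^0\to\pi_0(L^0)\hookrightarrow\pi_0(H^0)$. The integrand is $|\exp(-2\pi w)|=\exp(-2\pi q)$ with $q=\Re w$. It depends only on the $L^0$-coordinate, and it is constant on components. Hence
\[
\int_{L^{\mathrm{even}}}\left|\exp(-2\pi w)\right|\mu_{\mathrm{even}}
=\Bigl(\prod_{k\text{ even},\,k\ne0}|\pi_0(L^k)|\,\vol(L^k_0)\Bigr)\,\vol(L^0_0)\sum_{x\in\pi_0(L^0)}e^{-2\pi q(x)}.
\]
For even $k\neq0$, finiteness again forces compactness of $L^k_0$ and finiteness of $\pi_0(L^k)$. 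For the $k=0$ factor, the sum is a positive series, so finiteness of the whole product requires $\vol(L^0_0)<\infty$, i.e.\ $L^0_0$ compact. We also need the theta-type sum to converge.

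\emph{The theta sum.} This is the main step. Write $A=\pi_0(L^0)=A_{\mathrm{free}}\oplus\Tor(A)$ with $A_{\mathrm{free}}\cong\Z^r$. By Lemma \ref{lem:real-quadratic}, $q$ restricted to $A$ vanishes on torsion and descends to $\bar q$ on $A_{\mathrm{free}}$. Hence $\sum_A e^{-2\pi q}=|\Tor(A)|\sum_{n\in\Z^r}e^{-2\pi\bar q(n)}$. Since $\bar q(n)=\tfrac12 b(n,n)$, it extends to a real quadratic form $Q$ on $\R^r$.

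If $Q$ is positive definite, then $Q(n)\ge c|n|^2$ for some $c>0$, and the sum converges by comparison with a Gaussian. Conversely, suppose $Q$ is not positive definite. There is then a nonzero vector $v\in\R^r$ with $Q(v)\le0$. I want to produce infinitely many lattice points with $\bar q\le C$ for a fixed $C$, which makes the sum diverge. If some nonzero $n\in\Z^r$ has $\bar q(n)\le0$, the multiples $mn$ satisfy $\bar q(mn)=m^2\bar q(n)\le0$, and we are done. Otherwise $\bar q>0$ on $\Z^r\setminus0$, but $Q$ is not positive definite. In that case I would use a Dirichlet/Minkowski approximation argument. For large $N$, choose $n\in\Z^r$ with $|n|\sim N$ and $|n-tv|\lesssim N^{-1/(r-1)}$ for suitable $t\sim N$. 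The bound $Q(n)\le Q(tv)+O(|t|\,|n-tv|)+O(|n-tv|^2)$ then keeps $Q(n)$ bounded. A cleaner alternative is to avoid approximation: the set $\{Q\le1\}$ is unbounded and contains a cone neighborhood direction, or at least an unbounded convex-in-direction region whose volume is infinite. A Minkowski/pigeonhole count then shows it contains infinitely many lattice points. I expect this converse in the degenerate semidefinite case to be the only delicate point, and I would handle it via the approximation argument.

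Combining the three parts gives exactly the defining conditions of admissibility, and conversely those conditions give finiteness.
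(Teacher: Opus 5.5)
Your proposal is correct and follows the paper's proof. Like the paper, you decompose both integrals over connected components using translation invariance of the $\kappa$-densities, use positivity of each factor, and reduce the degree-zero sum via Lemma~\ref{lem:real-quadratic} to a torsion factor times a Gaussian lattice sum; the paper only asserts that this sum converges exactly when the form is positive definite on the real span, while you actually argue it. In your converse, the cross-term bound $O(|t|\,|n-tv|)=O(N^{(r-2)/(r-1)})$ does not stay bounded for $r\ge 3$; but in the remaining case $Q\ge 0$, so Cauchy--Schwarz gives $b(v,\cdot)=0$, hence $Q(n)=Q(n-tv)$, and rounding $tv$ to a nearest lattice point for $t=1,2,\dots$ already yields infinitely many $n$ with bounded $Q(n)$, with no Dirichlet approximation needed.
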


\begin{proof}
	Since $|\exp(-2\pi w)|=\exp(-2\pi\Re(w))$ is constant on connected components and $\mu_{\mathrm{even}}$ is translation-invariant,
	\[
	\int_{L^{\mathrm{even}}} \left|\exp(-2\pi w)\right|\,\mu_{\mathrm{even}}
	=
	\vol(L^{\mathrm{even}}_0)
	|\pi_0(L^{\mathrm{even}})/\pi_0(L^0)|
	\sum_{c\in\pi_0(L^0)}\exp(-2\pi\Re(w(c))).
	\]
	The volume factor is finite precisely when $L^{\mathrm{even}}_0$ is compact. The second factor is finite precisely when $\pi_0(L^{\mathrm{even}})/\pi_0(L^0)$ is finite. By Lemma~\ref{lem:real-quadratic}, the series is a finite torsion factor times a Gaussian sum over the free lattice. It converges if and only if the induced real quadratic form is positive definite.
		
	Likewise,
	\[
	\int_{L^{\mathrm{odd}}}1\ \mu_{\mathrm{odd}} = |\pi_0(L^{\mathrm{odd}})|
	\vol(L^{\mathrm{odd}}_0),
	\]
	which is finite precisely when the identity component $L^{\mathrm{odd}}_0$ is compact and the component group $\pi_0(L^{\mathrm{odd}})$ is finite.
\end{proof}

An admissible Lagrangian is called  \emph{maximal} if it is not properly contained in another admissible Lagrangian.

\begin{defn}
For a maximal admissible Lagrangian $L$, define
\begin{equation}\label{eq:BV-partition}
Z_H(L)\coloneq
\frac{\displaystyle\int_{L^{\mathrm{even}}}\exp(- 2\pi w)\,\mu_{\mathrm{even}}}
{\displaystyle\int_{L^{\mathrm{odd}}}1\ \mu_{\mathrm{odd}}}.
\end{equation}
We call $Z_H(L)$ the \emph{zero-mode partition function} of $H$ (with respect to $L$).
\end{defn}

The definition is motivated by two elementary features of the field-theoretic partition function. First, it is modeled on finite-dimensional BV integration over a Lagrangian. Second, partition functions are multiplicative under disjoint unions. Since the zero modes of a disjoint union form the product of the corresponding abelian BV groups, the quotient of the even and odd integrals in \eqref{eq:BV-partition} has the expected multiplicativity. 

For convenience, we set
\[
T_k\coloneq\bigl|\Tor (\pi_0(H^k))\bigr|,
\qquad
T(H)\coloneq\prod_{k\ne0}T_k^{(-1)^k},
\]
\[
\Theta(L)\coloneq\sum_{c\in\pi_0(L^0)}\exp(-2 \pi w(c))
\qquad
\vol(L_0)\coloneq\prod_k \vol(L_0^k)^{(-1)^k},
\]
with the convention $\vol(\{0\})=1$.

\begin{prop}\label{prop:partition-factorization}
For every maximal admissible Lagrangian $L$,
\begin{equation}\label{eq:partition-factorization}
Z_H(L)=\vol(L_0)\, T(H)\, \Theta(L).
\end{equation}
\end{prop}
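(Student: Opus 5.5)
The plan is to compute the numerator and denominator of \eqref{eq:BV-partition} separately, as in the proof of Lemma~\ref{lem:integrable-admissible}. Then I will use maximality to identify the finite component groups $\pi_0(L^k)$, $k\neq 0$, with the full torsion subgroups $\Tor(\pi_0(H^k))$.

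\textbf{Step 1: evaluating the two integrals.} The function $\exp(-2\pi w)$ is well defined, because $\exp(-2\pi\cdot i\Z)=1$. It is constant on connected components of $L^0$, since $w$ is defined on $\pi_0(H^0)$ and $\pi_0(L^0)\hookrightarrow\pi_0(H^0)$. The densities $\mu_{\mathrm{even}}$ and $\mu_{\mathrm{odd}}$ are translation invariant. Decomposing into components exactly as in Lemma~\ref{lem:integrable-admissible}, now keeping the complex weight, gives
\[
\int_{L^{\mathrm{even}}}\exp(-2\pi w)\,\mu_{\mathrm{even}}
=\prod_{k\ \mathrm{even}}\vol(L^k_0)\cdot\prod_{k\ \mathrm{even},\,k\neq0}|\pi_0(L^k)|\cdot\Theta(L).
\]
The series $\Theta(L)$ converges absolutely by admissibility. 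Similarly,
\[
\int_{L^{\mathrm{odd}}}1\,\mu_{\mathrm{odd}}=\prod_{k\ \mathrm{odd}}\vol(L^k_0)\,|\pi_0(L^k)|.
\]
Taking the quotient therefore reduces \eqref{eq:partition-factorization} to the claim that $|\pi_0(L^k)|=T_k$ for every $k\neq0$.

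\textbf{Step 2: one inclusion.} For $k\neq0$, admissibility says $\pi_0(L^k)$ is finite. It injects into $\pi_0(H^k)$, so its image lies in $\Tor(\pi_0(H^k))$.

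\textbf{Step 3: the reverse inclusion via maximality.} This is the main step. Fix $k\neq0$ and write $F=\pi_0(L^k)\subset \mathcal T\coloneqq\Tor(\pi_0(H^k))$. Let $P\subset H^k$ be the preimage of $\mathcal T$, which gives an extension $0\to H^k_0\to P\to\mathcal T\to 0$.

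The group $H^k_0\cong\fh^k/\Gamma^k$ is a connected abelian Lie group, hence divisible and therefore an injective $\Z$-module. Likewise $L^k_0$ is a compact torus and is divisible, so $L^k\cong L^k_0\oplus F$ for some section $s_F\colon F\to L^k$.

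Sections of $P\to\mathcal T$ form a torsor under $\mathrm{Hom}(\mathcal T,H^k_0)$. Pick any section $s$. The difference $s_F-s|_F$ lies in $\mathrm{Hom}(F,H^k_0)$, and by injectivity of $H^k_0$ it extends to $\mathcal T$. This produces a section $\tilde s\colon\mathcal T\to P$ extending $s_F$.

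Now set $L'^k\coloneqq L^k_0+\tilde s(\mathcal T)$, and keep $L'^j=L^j$ for $j\neq k$. Then:
\begin{itemize}
\item $L'^k$ is closed, being a finite union of translates of the compact group $L^k_0$.
\item Its Lie algebra is that of $L^k$, so $L'$ is still Lagrangian.
\item $\pi_0(L'^k)=\mathcal T$ is finite and injects into $\pi_0(H^k)$.
\item The degree-$0$ data are untouched.
\end{itemize}
Hence $L'$ is an admissible Lagrangian containing $L$. Maximality forces $L'=L$, so $F=\mathcal T$.

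Combining Step~1 with $|\pi_0(L^k)|=T_k$ for $k\neq0$ yields $Z_H(L)=\vol(L_0)\,T(H)\,\Theta(L)$. The sign $(-1)^k$ in $T(H)$ matches the even/odd placement of the factors $|\pi_0(L^k)|$. The only delicate point I anticipate is in Step~3: the splitting must be chosen compatibly with the existing $L^k$, so that $L\subset L'$ genuinely holds. This is what the injectivity of the divisible group $H^k_0$ provides.
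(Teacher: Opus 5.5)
Your proposal is correct and follows the same route as the paper. You evaluate the numerator and denominator componentwise as in the proof of Lemma~\ref{lem:integrable-admissible}, then use maximality to force $\pi_0(L^k)=\Tor(\pi_0(H^k))$ for $k\neq0$. Your Step~3 supplies the detail that the paper compresses into one sentence: the enlargement by torsion components can be chosen to contain $L$, via a section of $P\to\Tor(\pi_0(H^k))$ that extends a splitting of $L^k$, using injectivity of the divisible group $H^k_0$.
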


\begin{proof}
Enlarging a Lagrangian by discrete components does not change its Lie algebra. Maximality therefore forces
\[
\pi_0(L^k)=
\Tor(\pi_0(H^k)), \qquad k\ne0.
\]
The rest follows from the proof of Lemma \ref{lem:integrable-admissible}.
\end{proof}

For a Euclidean vector space $(V,\kappa)$ and a full lattice $\Lambda\subset V$, write
\[
{\det}_\Lambda \kappa\coloneq\det\bigl(\kappa(e_i,e_j)\bigr),
\qquad
\covol(\Lambda)\coloneq\sqrt{{\det}_\Lambda \kappa},
\]
where $(e_i)$ is any basis of $\Lambda$. Then
\[
\vol(V/\Lambda) = \covol(\Lambda).
\]
For graded lattices, write $\covol(\Lambda)\coloneqq\prod_k\covol(\Lambda^k)^{(-1)^k}$.

\begin{lem}\label{lem:unimodular-volume}
Suppose that $\Gamma$ is a full unimodular lattice in $\fh$, and let $\fl\subset\fh$ be a Lagrangian such that 
\[
\Gamma_L\coloneq\Gamma\cap\fl
\]
is full in $\fl$. Then
\[
\vol(L_0)=\bigl({\det}_\Gamma \kappa \bigr)^{1/4},
\qquad
{\det}_\Gamma \kappa \coloneq\prod_k\bigl({\det}_{\Gamma^k}\kappa^k\bigr)^{(-1)^k}.
\]
In particular, the graded volume $\vol(L_0)$ is independent of the embedding $L_0\hookrightarrow H_0$.
\end{lem}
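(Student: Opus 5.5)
The plan is to reduce everything to covolume computations. Since $\Gamma_L=\Gamma\cap\fl$ is full in $\fl$, we have $L_0^k=\exp(\fl^k)\cong\fl^k/\Gamma_L^k$, and hence $\vol(L_0^k)=\covol(\Gamma_L^k)$. The graded volume is therefore $\vol(L_0)=\covol(\Gamma_L)$, and similarly $({\det}_\Gamma\kappa)^{1/2}=\covol(\Gamma)$. So it suffices to prove the graded identity
\[
\covol(\Gamma)=\covol(\Gamma_L)^2 .
\]
The right-hand side depends only on $\Gamma$ and $\kappa$, which gives the independence statement for free.

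\textbf{Step 1: identify the quotient lattice via $\omega$.} Fix a degree $k$. Because $\Gamma_L^k=\Gamma^k\cap\fl^k$ is saturated, $\Gamma^k/\Gamma_L^k$ is a full lattice in $\fh^k/\fl^k$. Consider the composite
\[
\Gamma^k\xrightarrow{\omega^\flat}(\Gamma^{1-k})^\vee\to(\Gamma_L^{1-k})^\vee .
\]
\begin{itemize}
\item The first map is an isomorphism by unimodularity.
\item The second map, restriction, is surjective because $\Gamma_L^{1-k}$ is primitive in $\Gamma^{1-k}$.
\end{itemize}
The kernel of the composite is $\Gamma^k\cap(\fl^{1-k})^{\perp_\omega}=\Gamma^k\cap\fl^k=\Gamma_L^k$, using that $\fl$ is Lagrangian. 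Hence $\omega$ induces an isomorphism of lattices $\Gamma^k/\Gamma_L^k\cong(\Gamma_L^{1-k})^\vee$.

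\textbf{Step 2: metric compatibility.} This is the step I expect to need the most care.
\begin{itemize}
\item Identify $\fh^k/\fl^k$ with the $\kappa^k$-orthogonal complement $(\fl^k)^{\perp_\kappa}$, with the restricted metric.
\item Since $\omega^\flat:\fh^k\to(\fh^{1-k})^\vee$ is an isometry, it carries $\fl^k$ onto the annihilator $\operatorname{Ann}(\fl^{1-k})$.
\item It therefore carries $(\fl^k)^{\perp_\kappa}$ isometrically onto the $\kappa^\vee$-orthogonal complement of that annihilator.
\item Restriction of functionals maps this complement isometrically onto $(\fl^{1-k})^\vee$ with the dual of $\kappa|_{\fl^{1-k}}$. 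This is a standard linear-algebra fact for duals of subspaces, which I would verify with an orthonormal basis adapted to $\fl^{1-k}$.
\end{itemize}
Consequently the quotient lattice has covolume $\covol((\Gamma_L^{1-k})^\vee)=\covol(\Gamma_L^{1-k})^{-1}$.

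\textbf{Step 3: assemble.} Covolumes are multiplicative in the short exact sequence $0\to\Gamma_L^k\to\Gamma^k\to\Gamma^k/\Gamma_L^k\to0$ with orthogonal splitting of the ambient space. Hence
\[
\covol(\Gamma^k)=\covol(\Gamma_L^k)\,\covol(\Gamma_L^{1-k})^{-1}.
\]
Raise this to the power $(-1)^k$ and take the product over $k$. In the second factor, reindex by $j=1-k$, so that $(-1)^k=-(-1)^j$. The second factor then becomes $\prod_j\covol(\Gamma_L^j)^{(-1)^j}$, and we obtain $\covol(\Gamma)=\covol(\Gamma_L)^2$. Taking square roots gives
\[
\vol(L_0)=({\det}_\Gamma\kappa)^{1/4}.
\]
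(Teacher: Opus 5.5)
Your proposal is correct and follows essentially the same route as the paper. The paper uses the exact sequence $0\to\Gamma_L\to\Gamma\to\Gamma_L^\vee[-1]\to 0$ together with multiplicativity of graded covolumes, which it phrases as an isometry of Berezinian lines $\Ber_\Z(\Gamma)\cong\Ber_\Z(\Gamma_L)^{\otimes 2}$. Your Steps 1–3 unpack this argument degree by degree, and your Step 2 supplies the linear-algebra verification behind the paper's one-line assertion that the compatibility of $\kappa$ and $\omega$ makes the identification an isometry.
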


\begin{proof}
Unimodularity and the Lagrangian condition give an exact sequence of graded lattices
\[
0\longrightarrow\Gamma_L\longrightarrow\Gamma
\longrightarrow\Gamma_L^\vee[-1]\longrightarrow0.
\]
Taking Berezinian lines yields
\[
\Ber_\Z(\Gamma)\cong
\Ber_\Z(\Gamma_L)\otimes\Ber_\Z(\Gamma_L^\vee[-1])
\cong\Ber_\Z(\Gamma_L)^{\otimes2}.
\]
Compatibility of $\kappa$ and $\omega$ makes this identification an isometry. Taking norms of the integral Berezinian generators $1_\Gamma \in \Ber_\Z(\Gamma)$ and $1_{\Gamma_L} \in \Ber_\Z(\Gamma_L)$ then gives
\[
({\det}_{\Gamma}\kappa)^{1/2}
=\lVert 1_{\Gamma}\rVert
=\lVert 1_{\Gamma_L}\rVert^2
={\det}_{\Gamma_L}\kappa|_{\Gamma_L}
=\vol_{\kappa}(L_0)^2.
\]
and the result follows.
\end{proof}

We call $H$ \emph{nice} if it satisfies the following additional conditions:
\begin{itemize}
	\item $\Gamma$ is reduced unimodular;
	\item $q=\Re(w)$ descends to a positive-definite quadratic form on $\pi_0(H^0)/\Tor(\pi_0(H^0))$.
\end{itemize}

\begin{thm}\label{thm:L-independence}
The zero-mode partition function $Z_H(L)$ of a nice $H$ is independent of the maximal admissible Lagrangian $L$.
\end{thm}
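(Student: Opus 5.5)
By Proposition~\ref{prop:partition-factorization}, $Z_H(L)=\vol(L_0)\,T(H)\,\Theta(L)$ for every maximal admissible $L$. The factor $T(H)$ depends only on $H$. So it suffices to show that $\Theta(L)$ and $\vol(L_0)$ are each independent of $L$. I will first pin down the structure of a maximal admissible Lagrangian, and then treat the two factors separately.

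\emph{Step 1: structure of $\fl$.} Let $L$ be admissible. Each $L^k_0$ is compact, so $\fl^k\subset\Gamma^k_\R$ and $\Gamma\cap\fl$ is full in $\fl$. Since $\fl$ is Lagrangian and $\fl\subset\Gamma_\R$, we get $\rad(\Gamma)_\R=\Gamma_\R^{\perp_\omega}\subset\fl^{\perp_\omega}=\fl$. Hence $\fl$ is the inverse image of a Lagrangian $\underline{\fl}\subset\Gamma_\R/\rad(\Gamma)_\R$ that is rational with respect to $\underline\Gamma$, exactly as in the existence lemma. Here I use that $\Gamma_\R$ is coisotropic, which holds because $H$ is integrable.

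\emph{Step 2: $\Theta(L)$.} I claim that maximality forces $\pi_0(L^0)=\pi_0(H^0)$. Given an admissible $L$, let $\tilde L^0\subset H^0$ be the preimage of $\pi_0(H^0)$ under $H^0\to\pi_0(H^0)$, intersected appropriately with $L^0_0\cdot(\text{lifts})$. Concretely, take the closed subgroup generated by $L^0$ and lifts of generators of $\pi_0(H^0)$, chosen so that their torsion relations hold modulo $L^0_0$. This is possible because $H^0_0/L^0_0$ is divisible. The result has identity component $L^0_0$, so it is still Lagrangian and still has compact identity component. Since $H$ is nice, $\Re(w)$ descends to a positive-definite form on all of $\pi_0(H^0)_{\mathrm{free}}$, so the enlarged $L$ is still admissible. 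Maximality then gives $\pi_0(L^0)=\pi_0(H^0)$, and therefore
\[
\Theta(L)=\sum_{c\in\pi_0(H^0)}\exp(-2\pi w(c))
\]
is independent of $L$. The delicate point is the divisibility argument for choosing lifts; I expect it to be routine.

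\emph{Step 3: $\vol(L_0)$.} We have $\vol(L_0^k)=\covol(\Gamma^k\cap\fl^k)$. The lattice $\Gamma_L=\Gamma\cap\fl$ contains $\rad(\Gamma)$, which is full in $\rad(\Gamma)_\R$. Identify $\Gamma_\R/\rad(\Gamma)_\R$ with the $\kappa$-orthogonal complement of $\rad(\Gamma)_\R$ in $\Gamma_\R$. Then, degreewise,
\[
\covol(\Gamma_L)=\covol(\rad\Gamma)\,\covol(\Gamma_L/\rad\Gamma),
\]
where the second factor is computed in the quotient metric. The first factor is independent of $L$. For the second, $\underline\Gamma$ is a full unimodular lattice in the reduced $(-1)$-shifted symplectic space, and $\underline{\Gamma_L}=\Gamma_L/\rad\Gamma$ is full in the Lagrangian $\underline{\fl}$. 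Lemma~\ref{lem:unimodular-volume} would then give $({\det}_{\underline\Gamma}\underline\kappa)^{1/4}$, which is independent of $L$.

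\emph{Main obstacle.} The hard step is verifying the hypothesis of Lemma~\ref{lem:unimodular-volume} for the reduced data: that $\underline\omega^\flat$ is an isometry for the quotient metric $\underline\kappa$ on $\Gamma_\R/\rad(\Gamma)_\R$. I would first prove this using the isometry $\omega^\flat:\fh\to\fh^\vee[-1]$. This map sends $\rad(\Gamma)_\R=\Gamma_\R^{\perp_\omega}$ onto the annihilator of $\Gamma_\R$. Under the isometry, the orthogonal complement of this annihilator corresponds to $(\Gamma_\R)^\vee$ with the dual quotient metric. I would then check that $\omega^\flat$ maps $\Gamma_\R\cap\rad(\Gamma)_\R^{\perp_\kappa}$ isometrically onto $(\Gamma_\R/\rad\Gamma_\R)^\vee$. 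If this fails in general, the fallback is to run the Berezinian argument of Lemma~\ref{lem:unimodular-volume} directly on the three-term filtration
\[
\rad\Gamma\subset\Gamma_L\subset\Gamma,
\]
using the exact sequence $0\to\Gamma_L\to\Gamma\to\Gamma_L^\vee[-1]$ modulo the radical. From this sequence one would extract $\vol(L_0)$ as a quantity depending only on $(\Gamma,\kappa,\omega)$.
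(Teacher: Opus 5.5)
Your proposal is correct and follows the paper's proof step for step. It uses the factorization of Proposition~\ref{prop:partition-factorization}, then maximality plus niceness to force $\pi_0(L^0)=\pi_0(H^0)$, then the covolume splitting along $\rad(\Gamma)\subset\Gamma_L$ followed by Lemma~\ref{lem:unimodular-volume} on the reduced lattice $\underline\Gamma$.

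The two points you flag are exactly the ones the paper leaves implicit, and both go through without your fallback. First, the enlargement of $L^0$ should be stated as a splitting of $0\to H^0_0/L^0_0\to H^0/L^0_0\to\pi_0(H^0)\to0$ that extends the given section on $\pi_0(L^0)$, not just one respecting the torsion relations; such a splitting exists because divisible groups are injective. Second, your isometry check succeeds, since $\omega^\flat$ carries $(\Gamma_\R^{\perp_\omega})^{\perp_\kappa}$ isometrically onto $\kappa^\flat(\Gamma_\R)\cong\Gamma_\R^\vee$, so the reduced pairing satisfies the hypothesis of Lemma~\ref{lem:unimodular-volume}.
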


\begin{proof}
By Proposition~\ref{prop:partition-factorization}, $Z_H(L)=\vol(L_0)\, T(H)\, \Theta(L)$. The torsion factor $T(H)$ is independent of $L$. Since $H$ is nice, $q=\Re(w)$ is positive definite on the real span of the free part of $\pi_0(H^0)$. Maximality therefore forces
\[
\pi_0(L^0)=\pi_0(H^0).
\]
Hence $\Theta(L)$ is also independent of $L$. 

It remains only to prove that $\vol(L_0)$ is independent of $L$. 
Since $L_0$ is compact, its Lie algebra satisfies $\fl\subset\Gamma_\R$. As $\Gamma_\R^{\perp_\omega} \subset \Gamma_\R$ and $\fl$ is Lagrangian,
\[
\Gamma_\R^{\perp_\omega} \subset \fl^{\perp_\omega}=\fl.
\]
Consequently $\Gamma\cap \Gamma_\R^{\perp_\omega} \subset\Gamma_L=\Gamma\cap\fl$. Coisotropic reduction then gives an inclusion of lattices
\[
\underline\Gamma_L \coloneq \Gamma_L/(\Gamma\cap \Gamma_\R^{\perp_\omega} )
\subset
\underline\Gamma = \Gamma/\rad(\Gamma).
\]
The latter is unimodular by assumption. Equip $\Gamma_\R^{\perp_\omega}$ and $\underline{\Gamma}_\R = \Gamma_\R/\Gamma_\R^{\perp_\omega}$ with the inner products induced by $\kappa$. Graded covolumes are multiplicative for the resulting short exact sequence of lattices
\[
0 \longrightarrow \Gamma\cap \Gamma_\R^{\perp_\omega} \longrightarrow \Gamma_L \longrightarrow \underline\Gamma_L \longrightarrow 0,
\]
which is full in the short exact sequence of Euclidean vector spaces
\[
0 \longrightarrow \Gamma_\R^{\perp_\omega} \longrightarrow (\Gamma_L)_\R = \fl \longrightarrow (\underline \Gamma_L)_\R \longrightarrow 0.
\]
It follows that
\[
\vol(L_0)
=
\covol(\Gamma_L)
=
\covol(\Gamma \cap \Gamma_\R^{\perp_\omega})\,
\covol(\underline\Gamma_L).
\]
The first factor is manifestly independent of $L$, while Lemma~\ref{lem:unimodular-volume} implies that the second factor is also independent of $L$.
\end{proof}

\section{Elliptic higher abelian gauge theories}

In this section, we apply the preceding construction to quadratic elliptic $p$-form abelian gauge theories and make explicit how their global classical data enter the zero-mode partition function. 

\subsection{BV structures on zero modes}

Let $(X,g)$ be a closed oriented Riemannian $n$-manifold. We write $\check H^q(X)$ for the degree-$q$ Cheeger--Simons group of differential characters \cite{CheegerSimons}. We use the curvature and characteristic-class maps
\[
F:\check H^q(X)\longrightarrow\Omega^q_{\Z}(X),
\qquad
c:\check H^q(X)\longrightarrow H^q(X;\Z),
\]
and the standard exact sequences
\[
0\longrightarrow H^{q-1}(X;\R/\Z)\longrightarrow\check H^q(X)
\xlongrightarrow{F}\Omega^q_{\Z}(X)\longrightarrow0,
\]
\[
0\longrightarrow\Omega^{q-1}(X)/\Omega^{q-1}_{\Z}(X)
\longrightarrow\check H^q(X)\xlongrightarrow{c}H^q(X;\Z)\longrightarrow0.
\]
Here $\Omega^q_{\Z}(X)$ denotes the closed forms with integral periods. In particular, a $p$-form $\alpha$ determines a topologically trivial differential character. We also use the standard product and integration in differential cohomology without further comment; see \cite{CheegerSimons} and \cite[Appendix~A]{BelovMoore}. 

Let $S: \check H^{p+1}(X) \rightarrow \C/i\Z$ be a complex quadratic action functional. We call $S$ smooth if, for every $\check A$, the function $\alpha\mapsto S(\check A+\alpha)$ on $\Omega^p(X)$ is smooth. We assume that its first variation is represented by a map
\[
\sE_S: \check H^{p+1}(X) \rightarrow \Omega^{n-p}_{\mathrm{cl}}(X)
\]
such that
\[
\frac{d}{dt} \bigg|_{t=0} S(\check A+t\alpha)
= \epsilon_S \int_X \alpha \wedge \sE_S(\check A),
\]
where $\epsilon_S=1$ or $-i$. For a quadratic action, the linearization of $\sE_S$ is independent of the background field; the locality condition below identifies it with a differential operator $K$.

\begin{defn}
	We call $\sE_S$ the \emph{equation-of-motion} map of $S$, and say that $S$ is \emph{local} if there exists a morphism of complexes
	\[
	K\colon \mathrm{Bund}_\nabla(p+1)
	\longrightarrow
	\Omega_{\mathrm{cl}}^{n-p}
	\]
	such that the induced map on zeroth hypercohomology agrees with $\sE_S$.
\end{defn}

Note that such a morphism $K$, if it exists, has precisely one component
\[
K\colon \Omega^p \longrightarrow \Omega^{n-p},
\]
which satisfies
\[
d \circ K  =  0, \qquad K \circ d = 0.
\]
Moreover, one has
\[
\sE_S(\alpha) = \mathbb H^0(K)(\alpha) = K(\alpha)
\]
for $\alpha \in \Omega^p(X)$. Thus, $K$ is uniquely determined by $S$.

\begin{lem}\label{selfdual}
	$K$ satisfies
	\[
	\int_X  K \alpha \wedge \beta = (-1)^{p(n-p)} \int_X \alpha \wedge K \beta.
	\]
\end{lem}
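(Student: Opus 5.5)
The plan is to obtain the symmetry of $K$ from the symmetry of second derivatives of the smooth quadratic action $S$. Fix a background $\check A\in\check H^{p+1}(X)$ and consider the two-parameter function
\[
f(s,t)\coloneq S(\check A+s\alpha+t\beta),\qquad \alpha,\beta\in\Omega^p(X).
\]
Since $S$ is smooth, $f$ is a smooth function of $(s,t)\in\R^2$, once we lift it locally from $\C/i\Z$ to $\C$. This lift is harmless because only derivatives enter. Hence $\partial_s\partial_t f(0,0)=\partial_t\partial_s f(0,0)$. The whole proof consists in computing both mixed partials using the defining property of $\sE_S$.

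First differentiate in $t$. By the variational formula,
\[
\partial_t f(s,0)=\epsilon_S\int_X\beta\wedge\sE_S(\check A+s\alpha).
\]
Next I will use that the linearization of $\sE_S$ is given by $K$: namely $\sE_S(\check A+s\alpha)=\sE_S(\check A)+sK\alpha$. To justify this, I expect to use quadraticity of $S$ and the locality condition. The map $\mathbb H^0(K)$ is additive on $\mathbb H^0$ and agrees with $\sE_S$. Topologically trivial characters $\alpha$ are added in $\mathbb H^0(\mathrm{Bund}_\nabla(p+1))=\check H^{p+1}(X)$, so $\sE_S(\check A+s\alpha)-\sE_S(\check A)=\sE_S(s\alpha)=sK\alpha$.

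Differentiating in $s$ then gives
\[
\partial_s\partial_t f(0,0)=\epsilon_S\int_X\beta\wedge K\alpha.
\]
Symmetrically,
\[
\partial_t\partial_s f(0,0)=\epsilon_S\int_X\alpha\wedge K\beta.
\]
Since $\epsilon_S\neq0$, equating the two yields $\int_X\beta\wedge K\alpha=\int_X\alpha\wedge K\beta$.

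It remains to commute the wedge factors. Here $\beta$ has degree $p$ and $K\alpha$ has degree $n-p$, so
\[
\beta\wedge K\alpha=(-1)^{p(n-p)}K\alpha\wedge\beta.
\]
This gives $\int_X K\alpha\wedge\beta=(-1)^{p(n-p)}\int_X\alpha\wedge K\beta$, as claimed.

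The main obstacle is the second step: making rigorous that the derivative of $\sE_S$ along $\alpha$ equals $K\alpha$. The paper asserts this informally, but it should be checked. If additivity of $\sE_S$ along trivial directions is not directly available, I would instead differentiate $S(\check A+s\alpha+t\beta)$ as an honest polynomial of degree at most two in $(s,t)$, using quadraticity of $S$ restricted to the affine space $\check A+\Omega^p$. Then I would read off the $st$-coefficient, which is symmetric by construction. Comparing it with the locality identity $\sE_S(\alpha)=K\alpha$ for the zero background should give the same result.
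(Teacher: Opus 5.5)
Your proposal is correct and is essentially the paper's proof. Both compute the mixed second derivative of $S$ along two topologically trivial directions using the variational formula and the identity $\sE_S(\beta)=K\beta$, then invoke symmetry of mixed partials and graded commutativity of the wedge product. The only difference is that the paper expands around the base point $0$ rather than a general $\check A$; the term $\sE_S(\check A)$ drops out once you differentiate in $s$, so nothing changes.
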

\begin{proof}
	This follows from the computation
	\[
	\frac{\partial^2}{\partial s \partial t} \bigg|_{s=t=0} S(s \beta + t\alpha) = \epsilon_S \int_X \alpha \wedge \sE_S(\beta) = \epsilon_S \int_X \alpha \wedge K \beta
	\]
	and the symmetry of partial derivatives.
\end{proof}

We call the $(-1)$-shifted mapping cone
\[
\sF\coloneq\operatorname{Cone}(K)[-1]
\]
the full \emph{BV complex} of $S$. Explicitly, $\sF$ is the complex 
\[
\Z[p+1]
\longrightarrow
\Omega^0[p]
\xlongrightarrow{d}\cdots
\xlongrightarrow{d}
\Omega^p
\xlongrightarrow{K}
\Omega^{n-p}[-1]
\xlongrightarrow{d}\cdots
\xlongrightarrow{d}
\Omega^n[-p-1].
\]

Let $\sF_{\mathrm{pert}}$ denote the complex obtained from $\sF$ by deleting the integral term $\Z[p+1]$. Since each sheaf $\Omega^q$ is fine, the hypercohomology of $\sF_{\mathrm{pert}}$ is computed by the cohomology of
its complex of global sections:
\[
\mathbb H^k(X;\sF_{\mathrm{pert}})
\cong
H^k\!\bigl(\sF_{\mathrm{pert}}(X)\bigr)
=
\begin{cases}
	H^{k+p}_{\mathrm{dR}}(X), & k<0,\\[2mm]
	\ker K\big/\Omega^p_{\mathrm{ex}}(X), & k=0,\\[2mm]
	\Omega^{n-p}_{\mathrm{cl}}(X)\big/\operatorname{im}K, & k=1,\\[2mm]
	H^{k+n-p-1}_{\mathrm{dR}}(X), & k>1,
\end{cases}
\]
where $H_{\mathrm{dR}}(X)$ denotes the de Rham cohomology of $X$.
\begin{lem}
	We have
	\[
	\Omega^p_{\mathrm{cl}}(X)\subseteq \ker K,
	\qquad
	\operatorname{im}K\subseteq \Omega^{n-p}_{\mathrm{ex}}(X).
	\]
	Using harmonic representatives with respect to $g$, there are injections
	\[
	H^p_{\mathrm{dR}}(X)
	\hookrightarrow
	H^0\!\bigl(\sF_{\mathrm{pert}}(X)\bigr),
	\qquad
	H^{n-p}_{\mathrm{dR}}(X)
	\hookrightarrow
	H^1\!\bigl(\sF_{\mathrm{pert}}(X)\bigr).
	\]
\end{lem}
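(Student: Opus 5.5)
The two inclusions come from the identities $K\circ d=0$ and $d\circ K=0$, together with the self-adjointness in Lemma~\ref{selfdual} and Hodge theory. The injections then follow by composing with the harmonic inclusions.

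\textbf{The inclusion $\Omega^p_{\mathrm{cl}}(X)\subseteq\ker K$.} Hodge decompose a closed $p$-form as $\alpha=h+d\beta$ with $h$ harmonic. Since $K\circ d=0$, we get $K\alpha=Kh$, so it suffices to show $Kh=0$ for harmonic $h$. By $d\circ K=0$, the form $Kh$ is closed. I then show it is exact by checking that it pairs to zero with every closed $(p)$-form (Poincaré duality). For closed $\gamma$, Lemma~\ref{selfdual} gives
\[
\int_X Kh\wedge\gamma=\pm\int_X h\wedge K\gamma .
\]
I expect this step to be the main obstacle, because the right-hand side involves $K\gamma$ again and the argument looks circular.

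To break the circularity, I would use exactness of the image first. For any $\beta\in\Omega^p(X)$ and any closed $\theta\in\Omega^{p}_{\mathrm{cl}}(X)$ (the degree complementary to $n-p$),
\[
\int_X K\beta\wedge\theta=\pm\int_X\beta\wedge K\theta .
\]
Restrict to exact $\theta=d\eta$. Then $K\theta=0$, so exact test forms are harmless, and the pairing $(\beta,[\theta])\mapsto\int K\beta\wedge\theta$ only depends on the class $[\theta]\in H^p_{\mathrm{dR}}$.

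What this genuinely requires is that $K$ annihilate harmonic forms. The cleanest route is locality: $K$ is a differential operator with $K\circ d=0$. I would argue that $K$ factors as $K=K'\circ d$ on $\Omega^p$. The reason is that the principal part of an operator killing all exact forms must vanish on the symbol image $\xi\wedge(\cdot)$. By exactness of the symbol sequence of the de Rham complex, $K$ then factors through $d$ locally. Iterating over lower-order terms, one writes $K=P\,d+Q$ with $Q$ of order zero and $Q\circ d=0$, which forces $Q=0$.

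Granting $K=K'd$, closed forms lie in $\ker K$ immediately. Dually, $d\circ K=0$ and Lemma~\ref{selfdual} give $K=\pm d^{\dagger}$-type factorization on the other side: $K^{t}=\pm K$ and $K^t=d^t K'^t$, so $K=d\,K''$. Hence $\operatorname{im}K\subseteq\Omega^{n-p}_{\mathrm{ex}}$. Establishing this factorization rigorously is the hard step. I would first try the symbol/exactness argument, and alternatively use the Hodge decomposition $\Omega^p=\mathcal H\oplus d\Omega\oplus d^*\Omega$, showing $K$ vanishes on $\mathcal H$ through the self-adjointness pairing argument above.

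\textbf{The injections.} Send a class to its harmonic representative $h\in\Omega^p_{\mathrm{cl}}\subseteq\ker K$, and then map to $\ker K/\Omega^p_{\mathrm{ex}}$. If $h$ is exact, it is zero by Hodge theory, so the map is injective. Likewise, send a harmonic $(n-p)$-form $h$ into $\Omega^{n-p}_{\mathrm{cl}}/\operatorname{im}K$. If $h\in\operatorname{im}K\subseteq\Omega_{\mathrm{ex}}$, then $h=0$.
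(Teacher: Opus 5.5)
Your overall plan (use the locality of $K$ to show it kills closed forms, then use Lemma~\ref{selfdual} for the image) is the right one, and your argument for the two injections is fine. The proposal breaks down at the step you yourself flag as hard.

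\textbf{The Hodge-theoretic fallback cannot work.} A pairing argument can at best show that $Kh$ is exact, never that $Kh=0$. In fact, without locality the lemma is false when $H^p_{\mathrm{dR}}(X)\neq0$. Take
\[
K\alpha=\Bigl(\int_X\alpha\wedge u\Bigr)v+\Bigl(\int_X\alpha\wedge v\Bigr)u,
\]
with $u$ closed and non-exact and $v\neq0$ exact. This $K$ satisfies $d\circ K=0$, $K\circ d=0$ and the symmetry of Lemma~\ref{selfdual}. Yet $Kh=(\int_X h\wedge u)\,v\neq0$ for a suitable harmonic $h$, and $K(\star v)$ is not exact.

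\textbf{The symbol route has a hole as written.} For each $\xi\neq0$, pointwise exactness of $\Lambda^{p-1}\xrightarrow{\xi\wedge}\Lambda^p\xrightarrow{\xi\wedge}\Lambda^{p+1}$ only gives a factorization $\sigma(K)(\xi)=g(\xi)\circ(\xi\wedge)$ with $g$ rational in $\xi$, for instance $g(\xi)=\sigma(K)(\xi)\circ\iota_\xi/|\xi|^2$. Such a $g$ is not the symbol of a differential operator. To obtain a polynomial $g$ you would need exactness of the Koszul complex over $\R[\xi_1,\dots,\xi_n]$. You would then still need an order-by-order induction in coordinates and a partition-of-unity patching. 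This is all fixable, but it is far more than the lemma requires.

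\textbf{The missing idea.} Locality can be used much more cheaply than through a factorization $K=K'\circ d$. Since $K$ is a morphism of complexes of sheaves, it commutes with restriction, and $K\circ d=0$ holds over every open set. By the Poincar\'e lemma, a closed $\alpha$ is exact on each ball $U$ of a cover, say $\alpha|_U=d\beta_U$. Then $(K\alpha)|_U=K(d\beta_U)=0$, hence $K\alpha=0$. This is the paper's argument for the first inclusion.

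Once this is established, the pairing identity you wrote down stops being circular. For every $\theta\in\Omega^p_{\mathrm{cl}}(X)$, not just exact ones,
\[
\int_X K\beta\wedge\theta=\pm\int_X\beta\wedge K\theta=0.
\]
Moreover $K\beta$ is closed because $d\circ K=0$, so Poincar\'e duality forces $K\beta$ to be exact. This is exactly the paper's proof of the second inclusion. Neither the factorization nor its transpose is needed.
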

\begin{proof}
	The first inclusion follows directly from the Poincar\'e lemma: every global closed $p$-form is locally exact and hence lies in the kernel of $K$. The second inclusion follows from Lemma \ref{selfdual} and Poincar\'e duality. 
\end{proof}

\begin{defn}
	We call $S$ \emph{elliptic} if $\sF_{\mathrm{pert}}(X)$ is an elliptic complex.
\end{defn}

From now on, we always assume that $S$ is elliptic. In particular, $H\bigl(\sF_{\mathrm{pert}}(X)\bigr)$ is a finite-dimensional graded vector space. We also assume that, for every $k$, the image of the connecting homomorphism
\[
\delta^k:H^{k+p}(X;\Z)\longrightarrow H^k\!\bigl(\sF_{\mathrm{pert}}(X)\bigr)
\]
is discrete. This hypothesis holds in all examples considered below.

Let $D$ denote the differential of $\sF_{\mathrm{pert}}$ and $D^*$ denote its adjoint with respect to the $L^2$ inner product $\langle -, - \rangle_g$ on $\sF_{\mathrm{pert}}(X)$ induced by the Riemannian metric $g$. Hodge theory yields the orthogonal decomposition
\[
\sF_{\mathrm{pert}}(X) = H_D(X) \oplus \operatorname{im}(D) \oplus \operatorname{im}(D^*),
\]
where
\[
 H_D(X) \coloneqq \ker D \cap \ker D^*.
\]
There is a canonical isomorphism
\[
H\bigl(\sF_{\mathrm{pert}}(X)\bigr) \xlongrightarrow{\sim}  H_D(X), \qquad [\alpha] \longmapsto \alpha_h
\] 
where $\alpha_h$ denotes the harmonic representative of $[\alpha]$.

\begin{lem}\label{star}
	The Hodge star operator
	\[
	\star: H_D^k(X) \xlongrightarrow{\sim} H_D^{1-k}(X)
	\]
	defines an isometry of inner product spaces.
\end{lem}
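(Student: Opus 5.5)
The plan is to check three things: $\star$ carries the degree-$k$ component of $\sF_{\mathrm{pert}}(X)$ onto the degree-$(1-k)$ component, it is an $L^2$-isometry there, and it exchanges the operators $D$ and $D^*$ up to signs. Together these show that $\star$ maps $H_D(X)=\ker D\cap\ker D^*$ isomorphically onto itself with degrees reflected $k\mapsto 1-k$.

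\emph{Degrees.} In degree $k\le 0$ the component of $\sF_{\mathrm{pert}}(X)$ is $\Omega^{k+p}(X)$, and in degree $k\ge 1$ it is $\Omega^{n-p+k-1}(X)$. The Hodge star sends $\Omega^{k+p}$ to $\Omega^{n-p-k}$, and $n-p-k = n-p+(1-k)-1$. So $\star$ maps the degree-$k$ component to the degree-$(1-k)$ component for $k\le 0$. The case $k\ge1$ is the same computation read backwards. Since $\star$ is a pointwise isometry of $\Lambda^\bullet T^*X$ for $g$, it is an isometry for $\langle-,-\rangle_g$ on each component. Since $\star\star=\pm1$, it is a bijection.

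\emph{Intertwining.} The differential $D$ consists of two kinds of pieces.
\begin{itemize}
\item The de Rham pieces $d$ on both sides of $K$. For these we use the standard identity $d^*=\pm\star d\star$, i.e.\ $\star d=\pm d^*\star$ and $\star d^*=\pm d\star$, with signs depending only on degrees.
\item The middle piece $K\colon\Omega^p\to\Omega^{n-p}$. Here we compute $K^*$ from Lemma~\ref{selfdual}. For $\alpha\in\Omega^p$ and $\gamma\in\Omega^{n-p}$,
\[
\langle K\alpha,\gamma\rangle_g=\int_X K\alpha\wedge\star\gamma=(-1)^{p(n-p)}\int_X\alpha\wedge K\star\gamma=\pm\langle\alpha,\star^{-1}K\star\gamma\rangle_g .
\]
Hence $K^*=\pm\star K\star$ up to a degree-dependent sign, and in particular $\star K=\pm K^*\star$.
\end{itemize}
Thus $\star D=\pm D^*\star$ and $\star D^*=\pm D\star$ componentwise, and the signs are irrelevant for kernels.

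\emph{Harmonic forms.} Take $\alpha\in H^k_D(X)$. Then $D\alpha=0$ and $D^*\alpha=0$, so $D(\star\alpha)=\pm\star D^*\alpha=0$ and $D^*(\star\alpha)=\pm\star D\alpha=0$. Hence $\star\alpha\in H^{1-k}_D(X)$.

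As a sanity check, take $k=0$ and $\alpha\in\Omega^p$ with $K\alpha=0$ and $d^*\alpha=0$. Then $d\star\alpha=\pm\star d^*\alpha=0$ and $K^*\star\alpha=\pm\star K\alpha=0$, which are exactly the harmonicity conditions in degree $1$. The same argument applied to $\star^{-1}=\pm\star$ gives the inverse map, so $\star$ is an isometric isomorphism $H^k_D(X)\xrightarrow{\sim} H^{1-k}_D(X)$.

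The only step needing care is the identity $K^*=\pm\star K\star$. It relies on Lemma~\ref{selfdual} together with the convention $\langle a,b\rangle_g=\int_X a\wedge\star b$, and on tracking that the operators on either side of the cone match correctly under the reflection $k\mapsto 1-k$. I expect no genuine difficulty beyond this sign and index bookkeeping.
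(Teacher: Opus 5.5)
Your proposal is correct and follows the same route as the paper. Both arguments derive $D^*=\pm\star D\star$ from Lemma~\ref{selfdual} (together with the standard Stokes identity for the de Rham pieces), conclude that $\star$ preserves $\ker D\cap\ker D^*$, and then use $\star^2=\pm1$ for invertibility and the pointwise isometry of $\star$ for the $L^2$ isometry; your degree check $k\mapsto 1-k$ and your separate computation of $K^*$ just spell out bookkeeping the paper leaves implicit.
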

\begin{proof}
	By Lemma \ref{selfdual}, $D$ satisfies
	\[
	\int_X D \alpha \wedge \beta = \pm \int_X \alpha \wedge D \beta.
	\]
	It follows that
	\[
	D^*=\pm \star D\star.
	\]
	Hence $H_D(X)=\ker D\cap\ker(D\star)$ and $H_D(X)$ is preserved by $\star$. Since $\star^2=\pm 1$, the Hodge star restricts to an isomorphism on $H_D(X)$. Moreover, it is an isometry with respect to the inner product:
	$
	\langle \star\alpha_h,\star\beta_h\rangle_g
	=
	\langle \alpha_h,\beta_h\rangle_g.
	$
\end{proof}

\begin{prop}
	The finite-dimensional graded vector space $H\bigl(\sF_{\mathrm{pert}}(X)\bigr)$ carries a canonical $(-1)$-shifted symplectic pairing
	\[
	\omega\colon
	H^k\!\bigl(\sF_{\mathrm{pert}}(X)\bigr)
	\times
	H^{1-k}\!\bigl(\sF_{\mathrm{pert}}(X)\bigr)
	\longrightarrow \R,
	\]
	given by
	\[
	\omega([\alpha],[\beta])
	=
	\begin{cases}
		\displaystyle \int_X \alpha_h\wedge\beta_h, & k\leq 0,\\[6pt]
		\displaystyle -\int_X \beta_h\wedge\alpha_h, & k>0,
	\end{cases}
	\]
	where $\alpha_h$ and $\beta_h$ denote the harmonic representatives of
	$[\alpha]$ and $[\beta]$, respectively.
	
	Moreover, $\omega$ is compatible with the inner product
	\[
	\kappa\colon
	H^k\!\bigl(\sF_{\mathrm{pert}}(X)\bigr)
	\times
	H^k\!\bigl(\sF_{\mathrm{pert}}(X)\bigr)
	\longrightarrow \R,
	\qquad
	\kappa([\alpha],[\beta])
	=
	\langle \alpha_h,\beta_h\rangle_g
	=
	\int_X \alpha_h\wedge\star\beta_h.
	\]
\end{prop}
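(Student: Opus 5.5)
The proof will be carried out entirely on the harmonic space $H_D(X)$, using the canonical isomorphism $H(\sF_{\mathrm{pert}}(X))\cong H_D(X)$, $[\alpha]\mapsto\alpha_h$. There are four things to check, in order. First, $\omega$ is well defined and pairs only degrees $k$ and $1-k$. Second, it is graded skew-symmetric in the sense of the definition. Third, it is nondegenerate. Fourth, $\omega^\flat$ is an isometry for $\kappa$. I will reduce the last two to Lemma~\ref{star}.

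For the degree bookkeeping, note that a class in $H^k$ with $k\le 0$ is represented by a form of de Rham degree $k+p$ (in the $\Omega^0[p]\to\cdots\to\Omega^p$ part of $\sF$). A class in $H^{1-k}$ with $1-k\ge 1$ is represented by a form of degree $(1-k)+n-p-1=n-p-k$. The two degrees add to $n$, so $\int_X\alpha_h\wedge\beta_h$ makes sense. The case $k>0$ is the same pairing with the factors in the other order, so the definition is consistent, and $\omega$ vanishes unless the degrees add to $1$. Using harmonic representatives makes $\omega$ independent of any choice.

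It is still worth noting where the pairing would be invariant on non-harmonic representatives. For $k<0$ and $k>1$ this is Stokes' theorem. For the pair $(0,1)$, changing $\alpha\in\ker K$ by $d\gamma$ changes the integral by $\int d\gamma\wedge\beta=\pm\int\gamma\wedge d\beta=0$, since $\beta$ is closed. Changing $\beta$ by $K\gamma$ changes it by $\int\alpha\wedge K\gamma=\pm\int K\alpha\wedge\gamma=0$, by Lemma~\ref{selfdual}. So the pairing is the cohomological one.

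Skew-symmetry is a sign check. For $x\in H^k$ with $k\le0$ and $y\in H^{1-k}$, the definitions give $\omega(x,y)=\int x_h\wedge y_h$ and $\omega(y,x)=-\int x_h\wedge y_h$. Hence $\omega(x,y)=-\omega(y,x)$. The required sign is $-(-1)^{(k-1)(\ell-1)}$ with $\ell=1-k$, and $(k-1)(-k)=-k(k-1)$ is even, so the required sign is exactly $-1$. This is why the definition inserts the minus sign for $k>0$.

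For nondegeneracy and compatibility, the key identity is
\[
\omega(x,y)=\pm\int_X x_h\wedge y_h=\pm\langle x_h,\star^{-1}y_h\rangle_g .
\]
By Lemma~\ref{star}, $\star$ maps $H_D^{1-k}$ isometrically onto $H_D^{k}$. So, up to a sign depending only on the degree and on $\star^2=\pm1$, the map $\omega^\flat$ is the composite of the Riesz isomorphism $H^k\cong (H^k)^\vee$ given by $\kappa$ with the isometry $\star$. Nondegeneracy follows at once, and so does the fact that $\omega^\flat:(\fh,\kappa)\to(\fh^\vee[-1],\kappa^\vee[-1])$ is an isometry, since it is a composite of isometries. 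A sign does not affect isometry.

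The main obstacle is the one point where I would have to verify carefully that Lemma~\ref{star} gives a map between the correct de Rham degrees in the middle of the complex. On $H_D^0$, the harmonic $p$-forms with respect to $D$ are those with $K\alpha=0$ and $d^*\alpha=0$. Their images under $\star$ are $(n-p)$-forms in $H_D^1$, which must satisfy $d\beta=0$ and $K^*\beta=0$. This is exactly the content of $D^*=\pm\star D\star$. Once that is accepted from Lemma~\ref{star}, everything else is formal.
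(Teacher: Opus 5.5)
Your proposal is correct and follows the paper's proof: compatibility comes from the factorization $\omega^\flat=\pm\kappa^\flat\circ\star$ (you write it with $\star^{-1}=\pm\star$) together with Lemma~\ref{star}. The differences are minor. You read off nondegeneracy from the same factorization (pairing $x$ with $\star x_h$ gives $\pm\kappa(x,x)\neq 0$), whereas the paper argues it separately via wedge-orthogonality of $H_D(X)$ to $\operatorname{im}D\oplus\operatorname{im}D^*$ and nondegeneracy of the Poincar\'e pairing on all forms; you also explicitly verify graded skew-symmetry and representative-independence, which the paper leaves implicit.
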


\begin{proof}
	Nondegeneracy of $\omega$ follows from the Hodge decomposition. More precisely,
	if $\alpha_h\in H_D(X)$, then
	\[
	\int_X \alpha_h\wedge\bigl(D\gamma\pm\star D\star\eta\bigr)
	=
	\pm\int_X \star\alpha_h\wedge D\star\eta
	=
	\pm\int_X D\star\alpha_h\wedge\star\eta
	=
	0.
	\]
	Thus $H_D(X)$ is orthogonal, with respect to the Poincar\'e pairing, to the non-harmonic summands in the Hodge decomposition. Since the Poincar\'e pairing on differential forms is nondegenerate, its restriction to $H_D(X)$ is therefore nondegenerate.
	
	To see that $\omega$ is compatible with $\kappa$, observe that
	\[
	\omega^\flat=\pm \kappa^\flat \circ \star.
	\]
	By Lemma \ref{star}, the Hodge star is an isometry, and hence $\omega^\flat$ is an isometry.
\end{proof}

There is a short exact sequence 
\[
0\longrightarrow \sF_{\mathrm{pert}}
\longrightarrow \sF
\longrightarrow \Z[p+1]
\longrightarrow 0,
\]
which induces a long exact sequence of abelian groups
\[
\begin{tikzcd}
	\cdots \arrow[r]
	&
	H^k\bigl(\sF_{\mathrm{pert}}(X)\bigr) \arrow[r]
	&
	\mathbb H^k(X;\sF) \arrow[r]
	&
	H^{k+p+1}(X;\Z)
	\arrow[dll, "\delta"', rounded corners, to path={
		-- ([xshift=2ex]\tikztostart.east)
		|- ([yshift=-3ex]\tikztostart.south)
		-| ([xshift=-2ex]\tikztotarget.west)
		-- (\tikztotarget)
	}]
	\\
	&
	H^{k+1}\bigl(\sF_{\mathrm{pert}}(X)\bigr) \arrow[r]
	&
	\mathbb H^{k+1}(X;\sF) \arrow[r]
	&
	H^{k+p+2}(X;\Z) \arrow[r]
	&
	\cdots
\end{tikzcd}
\]
Denote by
\[
\delta^k:H^{k+p}(X;\Z)\longrightarrow H^k\!\bigl(\sF_{\mathrm{pert}}(X)\bigr)
\]
the connecting homomorphism. 

By the discreteness assumption above, $\operatorname{im}(\delta^k)$ is a lattice in $H^k\bigl(\sF_{\mathrm{pert}}(X)\bigr)$. The long exact sequence therefore naturally endows $\mathbb H^k(X;\sF)$ with the structure of an abelian Lie group.

\begin{lem}\label{lem:master-hypercohomology}
	For every $k$, there is a natural exact sequence of abelian groups
	\begin{equation}\label{eq:master-hypercohomology}
		0
		\longrightarrow \operatorname{im}(\delta^k)
		\longrightarrow H^k\bigl(\sF_{\mathrm{pert}}(X)\bigr)
		\longrightarrow \mathbb H^k(X;\sF)
		\longrightarrow \ker(\delta^{k+1})
		\longrightarrow 0.
	\end{equation}
	In particular, $\mathbb H^k(X;\sF)$ is canonically an abelian Lie group with
	Lie algebra
	\[
	H^k\bigl(\sF_{\mathrm{pert}}(X)\bigr),
	\]
	exponential lattice
	\[
	\operatorname{im}(\delta^k)
	\subset H^k\bigl(\sF_{\mathrm{pert}}(X)\bigr),
	\]
	and group of connected components
	\[
	\ker(\delta^{k+1})
	\subset H^{k+p+1}(X;\Z).
	\]
\end{lem}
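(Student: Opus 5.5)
The exact sequence \eqref{eq:master-hypercohomology} comes from splitting the long exact sequence in hypercohomology at $\mathbb H^k(X;\sF)$. The long exact sequence contains
\[
H^{k+p}(X;\Z)\xlongrightarrow{\delta^k}H^k\bigl(\sF_{\mathrm{pert}}(X)\bigr)\longrightarrow\mathbb H^k(X;\sF)\longrightarrow H^{k+p+1}(X;\Z)\xlongrightarrow{\delta^{k+1}}H^{k+1}\bigl(\sF_{\mathrm{pert}}(X)\bigr).
\]
Exactness at $H^k(\sF_{\mathrm{pert}}(X))$ says the kernel of $H^k(\sF_{\mathrm{pert}}(X))\to\mathbb H^k(X;\sF)$ is $\operatorname{im}(\delta^k)$. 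Exactness at $H^{k+p+1}(X;\Z)$ says the image of $\mathbb H^k(X;\sF)\to H^{k+p+1}(X;\Z)$ is $\ker(\delta^{k+1})$. Truncating gives the four-term sequence.

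Two bookkeeping points need checking. First, I will verify that the index shift is right: $\Z[p+1]$ in degree $k$ contributes $H^{k+p+1}(X;\Z)$. For the sheaf $\Z$, hypercohomology is ordinary sheaf cohomology, which agrees with singular cohomology on a manifold. Second, I will confirm that $\sF_{\mathrm{pert}}\to\sF\to\Z[p+1]$ is a degreewise split short exact sequence of complexes of sheaves, so that it induces a long exact sequence in hypercohomology. Naturality follows because everything comes from a functorial long exact sequence, using that $H^\bullet(\sF_{\mathrm{pert}}(X))$ is computed by global sections since the sheaves $\Omega^q$ are fine.

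For the Lie group structure, let $V=H^k(\sF_{\mathrm{pert}}(X))$. This is finite-dimensional by ellipticity, and by the discreteness hypothesis $\Lambda=\operatorname{im}(\delta^k)$ is a discrete subgroup of $V$. The image of $V$ in $\mathbb H^k(X;\sF)$ is therefore $V/\Lambda$, a connected abelian Lie group isomorphic to $\R^a\times T^b$. The quotient $\ker(\delta^{k+1})$ is a subgroup of the finitely generated group $H^{k+p+1}(X;\Z)$, using compactness of $X$, so it is finitely generated and discrete.

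An extension of a discrete group by the connected Lie group $V/\Lambda$ gets a unique abelian Lie group structure: make $V/\Lambda$ an open subgroup and translate its smooth structure to each coset. Translations are group automorphisms of the abelian group, so the transition maps are smooth. The identity component is then $V/\Lambda$, with Lie algebra $V$ and $\pi_0=\ker(\delta^{k+1})$. The exponential map is the composite $V\to V/\Lambda\hookrightarrow\mathbb H^k(X;\sF)$, so its kernel is $\Lambda$, as claimed. Finally, comparing with \eqref{eq:exp-sequence} shows the resulting structure is canonical: any Lie group structure making $V\to\mathbb H^k$ a local diffeomorphism onto an open subgroup coincides with this one.

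The main obstacle is not the algebra but the topology. We must justify that $\Lambda$ is discrete, which is exactly the standing assumption, so that $V/\Lambda$ is Hausdorff and a Lie group, and that the construction does not depend on choices. Since the group structure is defined intrinsically from the sequence and no splitting is chosen, independence is automatic. I would therefore state the discreteness hypothesis explicitly at that step and otherwise keep the argument formal.
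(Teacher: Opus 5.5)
Your proposal is correct and takes the same route as the paper: you truncate the long exact sequence induced by $0\to\sF_{\mathrm{pert}}\to\sF\to\Z[p+1]\to0$ at $\mathbb H^k(X;\sF)$, and you use the standing discreteness hypothesis on $\operatorname{im}(\delta^k)$ to make $H^k(\sF_{\mathrm{pert}}(X))/\operatorname{im}(\delta^k)$ the identity component. The paper only asserts that this sequence ``naturally endows'' $\mathbb H^k(X;\sF)$ with a Lie group structure, so your coset construction, the finite generation of $\ker(\delta^{k+1})$ from compactness of $X$, and the identification of the kernel of $\exp$ with $\operatorname{im}(\delta^k)$ fill in details the paper leaves implicit.
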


There is also a short exact sequence
\[
0\longrightarrow \Omega_{\mathrm{cl}}^{n-p}[-1]
\longrightarrow \sF
\xlongrightarrow{\mathrm{For}} \mathrm{Bund}_\nabla(p+1)
\longrightarrow 0,
\]
which induces a long exact sequence in hypercohomology
\[
\begin{tikzcd}
	\cdots \arrow[r]
	&
	\mathbb H^{k-1}(X;\Omega_{\mathrm{cl}}^{n-p}) \arrow[r]
	&
	\mathbb H^k(X;\sF) \arrow[r]
	&
	\mathbb H^k(X;\mathrm{Bund}_\nabla(p+1))
	\arrow[dll, "\delta"', rounded corners, to path={
		-- ([xshift=2ex]\tikztostart.east)
		|- ([yshift=-3ex]\tikztostart.south)
		-| ([xshift=-2ex]\tikztotarget.west)
		-- (\tikztotarget)
	}]
	\\
	&
	\mathbb H^{k}(X;\Omega_{\mathrm{cl}}^{n-p}) \arrow[r]
	&
	\mathbb H^{k+1}(X;\sF) \arrow[r]
	&
	\mathbb H^{k+1}(X;\mathrm{Bund}_\nabla(p+1)) \arrow[r]
	&
	\cdots
\end{tikzcd}
\]
In particular, around degree zero we have
\[
0  \longrightarrow \mathbb H^0(X;\sF) \xlongrightarrow{\mathbb{H}^0(\mathrm{For})} 
\check{H}^{p+1}(X) \xlongrightarrow{\mathbb{H}^0(K)} \Omega^{n-p}_{\mathrm{cl}}(X) \longrightarrow \cdots
\]
We then obtain an injective group homomorphism
\[
\mathbb{H}^0(\mathrm{For}): \mathbb H^0(X; \sF) \hookrightarrow \check{H}^{p+1}(X).
\]
We define
\[
S_{\mathrm{crit}}
\coloneqq
S\circ\mathbb H^0(\mathrm{For})
\colon
\mathbb H^0(X;\sF)
\longrightarrow
\C/ i\Z
\]
as the restriction of \(S\) to the critical locus \(\mathbb H^0(X;\sF)\).

\begin{lem}
	The function $S_{\mathrm{crit}}$ is constant on every connected component of $\mathbb H^0(X;\sF)$ and vanishes on the identity component. Thus, it descends to a complex quadratic weight
	\[
	w \colon \pi_0 \bigl(\mathbb H^0(X;\sF)\bigr)
	\longrightarrow \C/i\Z.
	\]
\end{lem}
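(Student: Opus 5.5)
We will show that $S_{\mathrm{crit}}$ is locally constant by differentiating along the identity component, check that it vanishes at the identity, and then read off the quadratic-weight axioms from those of $S$.

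\emph{Local constancy.} By Lemma~\ref{lem:master-hypercohomology}, the identity component of $\mathbb H^0(X;\sF)$ is the image of the Lie algebra $H^0(\sF_{\mathrm{pert}}(X))=\ker K/\Omega^p_{\mathrm{ex}}(X)$. Under $\mathbb H^0(\mathrm{For})$, a class $[\alpha]$ goes to the topologically trivial character determined by $\alpha\in\ker K$. Fix $x\in\mathbb H^0(X;\sF)$ with image $\check A\in\check H^{p+1}(X)$; then $\sE_S(\check A)=0$, because $\check A$ lies in the kernel of $\mathbb H^0(K)=\sE_S$. For $\alpha\in\ker K$, the derivative of $t\mapsto S(\check A+t\alpha)$ at $t=s$ is
\[
\epsilon_S\int_X\alpha\wedge\sE_S(\check A+s\alpha).
\]
Since $S$ is quadratic, $\sE_S$ is affine with linear part $K$, so $\sE_S(\check A+s\alpha)=\sE_S(\check A)+sK\alpha=0$. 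Hence $S$ is constant along every such path. As $S$ is smooth along $\Omega^p$-directions, $S_{\mathrm{crit}}$ is constant on each coset $x+\exp(H^0(\sF_{\mathrm{pert}}))$, that is, on each connected component. The step I expect to need the most care is justifying that $\sE_S$ is affine with linear part $K$; I would derive it from biadditivity of $S(\check A+\check B)-S(\check A)-S(\check B)$ together with the locality assumption, exactly as in Lemma~\ref{selfdual}.

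\emph{Vanishing at the identity.} Since $\mathbb H^0(\mathrm{For})(0)=0$, we need $S(0)=0$ in $\C/i\Z$. Biadditivity of $S(x+y)-S(x)-S(y)$ gives, at $x=y=0$, that $-S(0)$ equals the biadditive term $b(0,0)=0$, so $S(0)=0$. Combined with local constancy, $S_{\mathrm{crit}}$ vanishes on the whole identity component, and it descends to a function $w$ on $\pi_0$.

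\emph{Quadratic-weight axioms.} First, $w(x+y)-w(x)-w(y)$ is the restriction of the biadditive form of $S$ along the group homomorphism $\mathbb H^0(\mathrm{For})$. It is therefore biadditive on $\mathbb H^0(X;\sF)$, and it descends to $\pi_0$ because $w$ does. Second, for the symmetry $\Re\, w(x)=\Re\, w(-x)$, I would use that the complex quadratic action $S$ itself satisfies $\Re S(\check A)=\Re S(-\check A)$, taking this as part of the standing meaning of a complex quadratic action, and restrict it along $\mathbb H^0(\mathrm{For})$.

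If instead only biadditivity is assumed for $S$, I would argue as follows. Let $b$ be the biadditive form and set $\Delta(x)=S(x)-S(-x)$. Then
\[
\Delta(x)=S(x)+S(-x)-2S(-x)=b(x,-x)-2S(-x)=-b(x,x)-2S(-x).
\]
So $\Delta(x)=-2S(-x)-b(x,x)$ is additive modulo $i\Z$, hence a homomorphism $\pi_0\to\C/i\Z$. I would try to show that its real part vanishes on components by a scaling argument, in the spirit of Lemma~\ref{lem:real-quadratic}, or else invoke the explicit form of $S$ (a real kinetic term plus an imaginary Chern--Simons-type term) to conclude.
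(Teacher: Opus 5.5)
Your proposal is correct and follows essentially the paper's proof. You differentiate along $t\mapsto\check A+t\alpha$ with $\alpha\in\ker K$ and use $\sE_S(\check A+t\alpha)=\sE_S(\check A)+tK\alpha=0$. That step follows directly from locality, because $\sE_S=\mathbb H^0(K)$ is a homomorphism, so you do not need biadditivity for it. Your justification of $S(0)=0$ and your check of the weight axioms fill in what the paper leaves implicit.

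Drop the fallback for the symmetry axiom, though. It has a sign slip: in fact $S(x)+S(-x)=b(x,x)$. More importantly, $\Re\bigl(S(x)-S(-x)\bigr)$ is a real character of $\pi_0$, and no scaling argument can force it to vanish. Adding the topological term $\check A\mapsto\int_X F_{\check A}\wedge\star\beta$, with $\beta$ harmonic, preserves biadditivity, $\sE_S$ and locality, yet breaks $\Re\,w(x)=\Re\,w(-x)$. So that condition has to be assumed. Your main route assumes it, and the paper builds it into the definition of a complex quadratic weight.
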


\begin{proof}
	The identity component of $\mathbb H^0(X;\sF)$ is the image of
	\[
	\exp\colon
	H^0(\sF_{\mathrm{pert}}(X))
	=
	\frac{\ker K}{\Omega_{\mathrm{ex}}^p(X)}
	\longrightarrow
	\mathbb H^0(X;\sF).
	\]
	Let $x\in\mathbb H^0(X;\sF)$ and $[\alpha]\in H^0(\sF_{\mathrm{pert}}(X))$, with harmonic representative $\alpha_h$. The path
	\[
	t\longmapsto x+\exp(t[\alpha])
	\]
	lies in the same connected component of the critical locus. Writing $\check A=\mathbb H^0(\mathrm{For})(x)$, linearity of the equation-of-motion map gives
	\[
	\sE_S(\check A+t\alpha_h)=\sE_S(\check A)+tK\alpha_h = \mathbb H^0(K)(\check A) = 0,
	\]
	where we use $K \alpha_h = 0$ and $\mathbb H^0(K) \circ\mathbb H^0(\mathrm{For}) = 0$. It follows that
	\[
	\frac{d}{dt}S_{\mathrm{crit}}\bigl(x+\exp(t[\alpha])\bigr)
	=\epsilon_S\int_X\alpha_h\wedge\sE_S(\check A+t\alpha_h) = 0.
	\]
	Thus $S_{\mathrm{crit}}$ is constant on every connected component. On the identity component this constant is $S(0)=0$.
\end{proof}

We have thus proved the following.

\begin{thm}
	The data $(\omega,\kappa,w)$ defined above endow the collection of abelian Lie groups $\mathbb H^\bullet(X;\sF)$ with the structure of an abelian BV group.
\end{thm}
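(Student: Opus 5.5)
The theorem asks us to check the items of Definition~\ref{def:abelian-BV-group} one by one for $H^k\coloneqq\mathbb H^k(X;\sF)$. Almost everything has already been established, so the plan is mostly an assembly of earlier results, together with a check of the standing finiteness requirements on $H$.

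\textbf{Underlying groups.} Lemma~\ref{lem:master-hypercohomology} makes each $\mathbb H^k(X;\sF)$ an abelian Lie group. Its Lie algebra is $\fh^k=H^k(\sF_{\mathrm{pert}}(X))$, which is finite-dimensional because $S$ is elliptic. Its component group is $\pi_0=\ker(\delta^{k+1})$, a subgroup of $H^{k+p+1}(X;\Z)$. The latter is finitely generated since $X$ is a closed manifold, so $\pi_0$ is finitely generated as well.

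The groups vanish outside finitely many degrees. Both $\sF_{\mathrm{pert}}$ and $\Z[p+1]$ are concentrated in the bounded range $-p-1\le k\le p+1$, and integral cohomology vanishes outside degrees $0,\dots,n$. The long exact sequence then forces $H^k=0$ for $|k|$ large.

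The exponential sequence \eqref{eq:exp-sequence} is identified with \eqref{eq:master-hypercohomology}, so $\Gamma^k=\operatorname{im}(\delta^k)$. This is a lattice by the discreteness hypothesis.

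\textbf{The pairings $\omega$ and $\kappa$.} The preceding Proposition gives $\omega$ as a nondegenerate pairing $\fh^k\times\fh^{1-k}\to\R$, and shows that $\kappa$ is a direct sum of inner products $\kappa^k$ for which $\omega^\flat$ is an isometry. What remains is graded skew-symmetry, $\omega(x,y)=-(-1)^{(k-1)(\ell-1)}\omega(y,x)$ for $x\in\fh^k$, $y\in\fh^{\ell}$ with $k+\ell=1$.

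Exactly one of $k,\ell$ is $\le0$, and the case split in the definition of $\omega$ (the minus sign for $k>0$) is designed to produce this. One compares $\int\alpha_h\wedge\beta_h$ with $\int\beta_h\wedge\alpha_h$ using graded commutativity of the wedge product on the underlying form degrees. Since $(k-1)(\ell-1)=-k\ell$ and $k\ell=k(1-k)$ is always even, the required sign is simply $\omega(x,y)=-\omega(y,x)$. The definition by cases gives exactly this by construction, provided we read the prescription as always putting the degree-$\le0$ argument first.

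\textbf{The weight $w$.} The preceding Lemma shows that $S_{\mathrm{crit}}$ is constant on components, giving a map $w$ on $\pi_0(H^0)$. Two properties must be checked.

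\emph{Biadditivity.} Since $S$ is quadratic, $S(x+y)-S(x)-S(y)$ is biadditive on $\check H^{p+1}(X)$. Because $\mathbb H^0(\mathrm{For})$ is a homomorphism, this passes to $\mathbb H^0(X;\sF)$ and then to $\pi_0$.

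\emph{Evenness of $\Re w$.} This is the step I expect to be the main obstacle, since "quadratic action" has not been spelled out precisely. I would argue that $S(-x)=S(x)$ holds for a quadratic function up to the linear term $b$-defect. Concretely, $S(x)+S(-x)=S(0)-b(x,-x)=b(x,x)$ and $S(x)-S(-x)=2S(x)-b(x,x)$. So evenness follows if $S$ has no linear part, meaning $S(x)=\tfrac12 b(x,x)$ modulo $i\Z$ after taking real parts. This is part of the meaning of a quadratic action, and $\Re$ kills the $i\Z$ ambiguity, as in Lemma~\ref{lem:real-quadratic}.
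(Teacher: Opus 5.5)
Your proposal is correct and follows the paper's own route. The theorem is an assembly of Lemma~\ref{lem:master-hypercohomology}, the proposition constructing $(\omega,\kappa)$, and the lemma that $S_{\mathrm{crit}}$ descends to $\pi_0$; your extra checks (finiteness conditions, graded skew-symmetry holding by construction, and the weight axioms inherited along the homomorphism $\mathbb H^0(\mathrm{For})$) only make explicit what the paper leaves tacit.

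You handle evenness of $\Re w$ correctly by treating it as part of what ``complex quadratic action functional'' means, which is exactly what the paper implicitly assumes. It cannot be derived from locality: adding a real term $\int_X F_{\check A}\wedge\theta$ with $\theta$ closed leaves $\sE_S$ unchanged but breaks evenness. The only slip is that $(k-1)(\ell-1)=k\ell$, not $-k\ell$, when $k+\ell=1$, which is harmless since only its parity matters.
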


For later use, let
\[
\rho^r:H^r(X;\Z)\longrightarrow H^r_{\mathrm{dR}}(X)
\]
denote the integral-to-de Rham map, and put
\[
\sH^r \coloneq \ker \Delta^r,
\qquad
\Lambda^r \coloneq \operatorname{im}\bigl(H^r(X;\Z)\xlongrightarrow{\rho^r} H^r_{\mathrm{dR}}(X) \xlongrightarrow{\sim} \sH^r\bigr),
\qquad
\tau_r\coloneq|\Tor H^r(X;\Z)|,
\]
where $\Delta^r$ is the Hodge Laplacian on $\Omega^r(X)$. 

\subsection{Maxwell theory}

Let $F_{\check A}$ denote the curvature of $\check{A} \in \check{H}^{p+1}(X)$. For $e>0$, the Euclidean Maxwell action is
\[
S_{\mathrm{M}}(\check A)
=
\frac{1}{2 e^2}
\int_X F_{\check A}\wedge \star F_{\check A}.
\]
With $\epsilon_{S_{\mathrm{M}}}=(-1)^{p+1}$, the middle operator $K$ is 
\[
K_{\mathrm M}= e^{-2}\,d\star d.
\]
The BV complex $\sF_{\mathrm M}$ is
\begin{equation*}
\Z[p+1]\rightarrow
\Omega^0[p]\xlongrightarrow d\cdots\xlongrightarrow d
\Omega^p
\xlongrightarrow{e^{-2}d\star d}
\Omega^{n-p}[-1]\xlongrightarrow d\cdots\xlongrightarrow d
\Omega^n[-p-1].
\end{equation*}

\begin{lem}
The perturbative hypercohomology is
\[
H^k(\sF_{\mathrm M, \mathrm{pert}}(X)) =
\begin{cases}
H^{p+k}_{\mathrm{dR}}(X),&k \leq 0,\\
H^{n-p+k-1}_{\mathrm{dR}}(X),&k >0.
\end{cases}
\]
Under these identifications,
\[
\delta^k=
\begin{cases}
\rho^{p+k},&k\le0,\\
0,&k>0,
\end{cases}
\]

\end{lem}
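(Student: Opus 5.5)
The plan has two parts. First I compute the cohomology of the global-section complex $\sF_{\mathrm M,\mathrm{pert}}(X)$ using the general formula already recorded for $\mathbb H^k(X;\sF_{\mathrm{pert}})$. Then I identify the connecting homomorphisms by an explicit zig-zag in the total complex.

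For $k<0$ and $k>1$ the general formula gives $H^{p+k}_{\mathrm{dR}}(X)$ and $H^{n-p+k-1}_{\mathrm{dR}}(X)$ directly, so only $k=0,1$ need work.

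For $k=0$ we need $\ker(d\star d)=\Omega^p_{\mathrm{cl}}(X)$. One inclusion is the earlier lemma. For the other, if $d\star d\alpha=0$ then
\[
\int_X d\alpha\wedge\star d\alpha=\pm\int_X \alpha\wedge d\star d\alpha=0
\]
by Stokes, so $d\alpha=0$. Hence $H^0=\Omega^p_{\mathrm{cl}}/\Omega^p_{\mathrm{ex}}=H^p_{\mathrm{dR}}(X)$.

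For $k=1$ we need $\operatorname{im}(d\star d)=\Omega^{n-p}_{\mathrm{ex}}(X)$. One inclusion is again the earlier lemma. For the other, take an exact form $d\beta$ with $\beta\in\Omega^{n-p-1}$. By Hodge decomposition we may assume $\beta$ is coexact, so $\beta=\star d\gamma$ for some $\gamma\in\Omega^p$ (up to sign, since $\star$ exchanges exact and coexact forms). Then $d\beta=\pm d\star d\gamma$. Thus $H^1=\Omega^{n-p}_{\mathrm{cl}}/\Omega^{n-p}_{\mathrm{ex}}=H^{n-p}_{\mathrm{dR}}(X)$. The factor $e^{-2}$ is harmless throughout.

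For the connecting maps, I compute $\delta^k$ from the short exact sequence $0\to\sF_{\mathrm{pert}}\to\sF\to\Z[p+1]\to0$. Lift a Čech cocycle in $\Z$ of degree $p+k$ to $\sF$ and apply the total differential. The image is the inclusion $\Z\to\underline{\R}\subset\Omega^0$ of the cocycle, viewed in the truncated de Rham part $\Omega^0[p]\to\cdots\to\Omega^p$.

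For $k\le 0$, the relevant piece of $\sF_{\mathrm{pert}}$ is the de Rham complex $\Omega^\bullet$ truncated at $\Omega^p$. In these degrees it agrees with the full de Rham complex, because $K$ enters only at total degree $\geq 0$ and $H^0$ was shown above to equal $H^p_{\mathrm{dR}}$. So $\delta^k$ is the standard Čech--de Rham comparison $H^{p+k}(X;\Z)\to H^{p+k}(X;\R)\cong H^{p+k}_{\mathrm{dR}}(X)$, which is $\rho^{p+k}$ up to the sign conventions of the cone.

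For $k>0$ the class lands in $H^k(\sF_{\mathrm{pert}})$, which is built from the $\Omega^{n-p}[-1]\to\cdots$ part. The zig-zag only passes through $\Omega^{\le p}$ and then $K$. Concretely, the cocycle $c\in\check C^{p+k}(\Z)$ gives a Čech cocycle in $\underline{\R}$ of degree $p+k>p$. Viewed in the truncated complex $\Omega^0[p]\to\cdots\to\Omega^p$, this corresponds to the class in $\mathbb H^k$ of $\Omega^{\le p}[p]$. That group is $H^{p+k}(X;\R)$ modulo the contributions in degrees at most $p$.

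Then I compose with $K$. For a closed $p$-form $\alpha$, or more generally a Čech--de Rham representative whose top component is closed, $K=e^{-2}d\star d$ kills it. The route I would try first is naturality: $\delta^k$ factors through $\mathbb H^k(\mathrm{For})$ applied to the Deligne complex, where the image of the integral class lands in the hypercohomology of $\Omega^{\le p}[p]$ in degree $k>0$. There, the class corresponds to closed data at the top level, on which $K$ vanishes since $K\circ d=0$ and $\ker K\supseteq\Omega^p_{\mathrm{cl}}$. Hence $\delta^k=0$.

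This is the step I expect to be the main obstacle: carefully checking that the image in $\mathbb H^k(\sF_{\mathrm{pert}})$ is zero, not merely that it vanishes on the closed part. A cleaner argument would be that the map $\mathbb H^k(\Omega^{\le p}[p])\to\mathbb H^k(\sF_{\mathrm{pert}})$ factors through $\mathbb H^{k-1}(\Omega_{\mathrm{cl}}^{n-p})\to$ and the relevant Čech components are locally constant, hence killed by $d$.
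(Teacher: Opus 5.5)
\paragraph{Verdict.} There is a genuine gap: the claim $\delta^k=0$ for $k>0$ is not proved, and neither route you sketch for it works.

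\paragraph{What is fine.} Your computation of the groups $H^k(\sF_{\mathrm M,\mathrm{pert}}(X))$ is correct; it spells out the paper's one-line appeal to Hodge decomposition. Your identification $\delta^k=\rho^{p+k}$ for $k\le 0$ is the paper's argument.

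\paragraph{Why your routes for $k>0$ fail.}
\begin{enumerate}
\item The truncation $\Omega^{\le p}[p]$ is a \emph{quotient} of $\sF_{\mathrm M,\mathrm{pert}}$, not a subcomplex, since $K$ carries $\Omega^p$ out of it. So there is no map $\mathbb H^k(\Omega^{\le p}[p])\to\mathbb H^k(\sF_{\mathrm M,\mathrm{pert}})$ through which $\delta^k$ could factor.
\item In any case $\mathbb H^k(\Omega^{\le p}[p])=0$ for $k>0$, because it is a complex of fine sheaves concentrated in degrees $\le 0$. It is not ``$H^{p+k}(X;\R)$ modulo lower contributions.''
\item More fundamentally, the only properties you use, $K\circ d=0$ and $\Omega^p_{\mathrm{cl}}\subseteq\ker K$, hold for every local $K$. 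In the \v{C}ech zig-zag they only show that the cochain $b_{ij}$ of closed $p$-forms on overlaps is a cocycle of the total complex. They do not show its class vanishes. Indeed, $K_{\mathrm{CS}}=kd$ has the same two properties, yet Lemma~\ref{lem:CS-cohomology} gives $\delta^j=k\rho^{p+j}\neq0$ for $j>0$. Your argument therefore proves too much.
\item The component that survives the zig-zag is not locally constant, so the ``killed by $d$'' route fails as well.
\end{enumerate}

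\paragraph{The missing step.} You need one more zig-zag step, combined with the Maxwell-specific fact you already proved. Write $b=\check\delta A$ with $A_i\in\Omega^p(U_i)$. The forms $dA_i$ glue to a global closed form $F$ representing the real image of $[c]$. In the total complex, $b$ is cohomologous to $\mp KA_i=\mp e^{-2}d\star F|_{U_i}$, which is a global exact $(n-p)$-form, in general nonzero. Hence
\[
\delta^1[c]=\pm[e^{-2}d\star F]\in\Omega^{n-p}_{\mathrm{cl}}/\operatorname{im}K_{\mathrm M}.
\]
This vanishes precisely because $\operatorname{im}K_{\mathrm M}=\Omega^{n-p}_{\mathrm{ex}}$. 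For Chern--Simons the same computation yields $[kF]\neq0$. This is what the paper's remark that $d\star F$ is exact encodes.

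\paragraph{A clean way to handle all degrees.} Define
\[
\Phi\colon\Omega^\bullet[p]\longrightarrow\sF_{\mathrm M,\mathrm{pert}},\qquad
\Phi=\mathrm{id}\ \text{on }\Omega^{\le p},\quad
\Phi=e^{-2}d\star\colon\Omega^{p+1}\to\Omega^{n-p},\quad
\Phi=0\ \text{on }\Omega^{\ge p+2}.
\]
This is a chain map, since $e^{-2}d\star d=K_{\mathrm M}$ and $d\circ d\star=0$. It extends the inclusion of constants $\R[p]\to\sF_{\mathrm M,\mathrm{pert}}$ that induces the connecting map. Consequently, up to sign, $\delta^k=H^k(\Phi(X))\circ\rho^{p+k}$. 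This gives:
\begin{itemize}
\item $\rho^{p+k}$ for $k\le 0$;
\item $[\theta]\mapsto[e^{-2}d\star\theta]=0$ for $k=1$;
\item zero for $k\ge 2$.
\end{itemize}
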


\begin{proof}
Away from the middle arrow, the perturbative BV complex is the de Rham complex. At degree zero,
\[
H^0(\sF_{\mathrm M, \mathrm{pert}}(X))
=\ker K_{\mathrm M}/\Omega^p_{\mathrm{ex}}(X).
\]
For the Maxwell operator, Hodge decomposition gives $\ker K_{\mathrm M}=\Omega^p_{\mathrm{cl}}(X)$ and $\operatorname{im} K_{\mathrm M} = \Omega^{n-p}_{\mathrm{ex}}(X)$; hence
\[
H^0(\sF_{\mathrm M, \mathrm{pert}}(X))= H^p_{\mathrm{dR}}(X), \qquad H^1(\sF_{\mathrm M, \mathrm{pert}}(X))= H^{n-p}_{\mathrm{dR}}(X).
\]

For $k \leq 0$, the connecting map is induced by $\Z\hookrightarrow\R$ and is therefore the usual integral-to-de Rham map. For $k > 0$, a topological class contributes through the curvature, but the equation-of-motion term is $d \star F$ and is exact; its cohomology class is therefore zero. 
\end{proof}

\begin{rmk}
	Therefore, for $j<0$ there is a natural short exact sequence
	\[
	0\longrightarrow
	\frac{\mathcal H^{p+j}}{\Lambda^{p+j}}
	\longrightarrow
	\Hq^j(X;\sF_{\mathrm M})
	\longrightarrow
	\Tor(H^{p+j+1}(X;\Z))
	\longrightarrow 0,
	\]
	while in degree zero,
	\[
	0\longrightarrow
	\frac{\mathcal H^p}{\Lambda^p}
	\longrightarrow
	\Hq^0(X;\sF_{\mathrm M})
	\longrightarrow
	H^{p+1}(X;\Z)
	\longrightarrow 0.
	\]
	For $j>0$, one has
	\[
	0\longrightarrow
	\mathcal H^{n-p+j-1}
	\longrightarrow
	\Hq^j(X;\sF_{\mathrm M})
	\longrightarrow
	H^{p+j+1}(X;\Z)
	\longrightarrow 0.
	\]
\end{rmk}

Set
\[
H_{\mathrm M}^k\coloneq\mathbb H^k(X;\sF_{\mathrm M}),
\qquad H_{\mathrm M}\coloneq\{H_{\mathrm M}^k\}_{k\in\Z}.
\]
The total Lie algebra of the zero-mode abelian BV group $H_{\mathrm M}$ is therefore
\[
\fh_{\mathrm M}^k\cong
\begin{cases}
\mathcal H^{p+k},&k \leq 0,\\
\mathcal H^{n-p+k-1},&k >0,
\end{cases}
\]
with inner product $\kappa_{\mathrm M}$ given by the $L^2$ inner product
\[
\kappa_{\mathrm M}(\alpha,\beta)= \int_X\alpha\wedge \star \beta,
\qquad |\alpha|=|\beta|,
\]
and shifted symplectic pairing induced by Poincar\'e duality,
\[
\omega_{\mathrm M}(\alpha,\beta)= \pm \int_X\alpha\wedge\beta,
\qquad |\alpha|+|\beta|=1.
\]
The exponential lattice is
\[
\Gamma_{\mathrm M}^k=
\begin{cases}
\Lambda^{p+k},& k \leq 0,\\
0,&k >0,
\end{cases}
\]
so $\Gamma_{\mathrm M}$ is integral and Lagrangian in $\fh_{\mathrm M}$, hence coisotropic and reduced unimodular, with trivial reduction. Finally,
\[
\pi_0(\Hq^0(X;\sF_{\mathrm M})) \cong H^{p+1}(X;\Z),
\]
and for $c\in H^{p+1}(X;\Z)$ the zero-mode weight is
\[
w(c)
=
\frac{1}{2e^2}\int_X c_\R \wedge \star c_\R,
\]
where $c_\R\in\mathcal H^{p+1}$ is the harmonic representative of $c$ modulo torsion. 

In summary, we obtain the following.
\begin{prop}
	The abelian BV group $H_{\mathrm M}$ is nice.
\end{prop}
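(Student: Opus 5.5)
To show $H_{\mathrm M}$ is nice, we check three things: integrability ($\Gamma_{\mathrm M}$ integral and coisotropic), reduced unimodularity of $\Gamma_{\mathrm M}$, and positive definiteness of $\Re(w)$ on the free part of $\pi_0(H^0_{\mathrm M})$. All structural input comes from the lemma computing $\delta^k$: $\Gamma^k_{\mathrm M}=\Lambda^{p+k}$ for $k\le 0$ and $\Gamma^k_{\mathrm M}=0$ for $k>0$.

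\emph{Integrality and the Lagrangian property.} The pairing $\omega_{\mathrm M}$ pairs $\fh^k$ with $\fh^{1-k}$, so exactly one of $k$, $1-k$ is $\le 0$. Hence $\omega_{\mathrm M}(\Gamma_{\mathrm M},\Gamma_{\mathrm M})=0$, which makes $\Gamma_{\mathrm M}$ integral and $(\Gamma_{\mathrm M})_\R$ isotropic. Since $\Lambda^r$ is a full lattice in $\sH^r$ (the image of integral cohomology spans de Rham cohomology), we get $(\Gamma_{\mathrm M})_\R=\bigoplus_{k\le0}\fh^k_{\mathrm M}$. By Poincar\'e duality this is exactly half of $\fh_{\mathrm M}$, paired against $\bigoplus_{k>0}\fh^k_{\mathrm M}$, so it is Lagrangian. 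Therefore $\rad(\Gamma_{\mathrm M})=\Gamma_{\mathrm M}$, the reduction $\underline\Gamma_{\mathrm M}=0$ is trivial, and reduced unimodularity holds vacuously. The only care needed is sign bookkeeping in $\omega_{\mathrm M}$, which does not affect vanishing.

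\emph{Positivity.} We identify $\pi_0(H^0_{\mathrm M})=\ker\delta^1=H^{p+1}(X;\Z)$, using $\delta^1=0$. The weight is $w(c)=\frac1{2e^2}\|c_\R\|^2_{L^2}$. This is real, so $\Re(w)=w$. It is quadratic because $c\mapsto c_\R$ is a homomorphism into $\sH^{p+1}$, and it is even. We then show $w$ is positive definite on $H^{p+1}(X;\Z)/\Tor$. If $w(c)=0$, then $c_\R=0$, so $\rho^{p+1}(c)=0$ and $c$ is torsion. Moreover $\Lambda^{p+1}\subset\sH^{p+1}$ is a full lattice and $w$ is the restriction of the positive-definite $L^2$ form scaled by $1/(2e^2)$, so the descended form is positive definite on $\Lambda^{p+1}\cong H^{p+1}(X;\Z)/\Tor$.

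\emph{Main obstacle.} The delicate point is confirming that $w$ really is the restriction of $S_{\mathrm M}$ to the critical locus. For $\check A\in\mathbb H^0(X;\sF_{\mathrm M})$ we have $d\star F_{\check A}=0$, so $F_{\check A}$ is harmonic. Its class is the image of $c(\check A)$, so $F_{\check A}=c_\R$. Thus $S_{\mathrm M}(\check A)=w(c)$, which also confirms that the weight is independent of the flat/torsion part.
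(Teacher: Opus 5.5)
Your proposal is correct and follows the paper's argument. Since $\Gamma_{\mathrm M}$ is concentrated in degrees $\le 0$, it is integral with Lagrangian real span and trivial reduction, and the weight $\frac{1}{2e^2}\|c_\R\|^2$ is positive definite on $H^{p+1}(X;\Z)/\Tor(H^{p+1}(X;\Z))$. Your additional check that critical curvatures are harmonic, so that $S_{\mathrm M}$ restricts to exactly this weight, fills in a step the paper asserts without proof.
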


Any maximal admissible Lagrangian $L\subset H_{\mathrm M}$ has Lie algebra $\Gamma_{\mathbb R}$ and component groups
\[
\pi_0(L^k)=
\begin{cases}
	\Tor (H^{k+p+1}(X;\Z)), & k\neq 0,\\[2mm]
	H^{p+1}(X;\Z), & k=0.
\end{cases}
\]
The three factors in Proposition~\ref{prop:partition-factorization} are now completely explicit: the free part of $H^{p+1}(X;\Z)$ gives the Gaussian theta sum, the torsion component groups give the alternating finite factor, and the compact harmonic tori give the graded volume. Thus
\begin{equation}\label{eq:maxwell-partition}
Z_{H_{\mathrm{M}}}(L)
=
\sum_{c\in H^{p+1}(X;\Z)_{\mathrm{free}}}
\exp\!\left(-\frac{\pi}{e^2}
\int_X c_\R \wedge \star c_\R
\right)
\prod_{r=-p}^{n-p-1}
\tau_{r+p+1}^{(-1)^{r}}
\prod_{-p \leq r \leq 0} \vol\left(\frac{\sH^{r+p}}{\Lambda^{r+p}}\right)^{(-1)^r}.
\end{equation}
By Theorem~\ref{thm:L-independence}, this expression is independent of the maximal admissible Lagrangian $L$.

\begin{rmk}\label{rmk:maxwell-comparison}
	Compared with the zero-mode contribution to the higher Maxwell partition function obtained in \cite[Eq.~(22)]{DonnellyMichelWall} and \cite[Eq.~(47.21)]{MooreSaxena}, \eqref{eq:maxwell-partition} contains a different torsion factor. This difference arises essentially from the presence of antifields in the BV complex together with the maximality requirement on the gauge-fixing Lagrangian $L$.
\end{rmk}

\subsection{Abelian Chern--Simons theory}\label{sec: CS}

Let $p=2\ell+1$ and $n=2p+1=4\ell+3$. Let $k\in\Z\setminus\{0\}$. The higher abelian Chern--Simons theory is defined by an imaginary quadratic action on $\check H^{p+1}(X)$. Formally, one would like to write
\[
S_{\mathrm{CS}}(\check A)
=
-\frac{ik}{2}\int_X \check A\cup\check A,
\qquad
\check A\in\check H^{p+1}(X).
\]
However, the differential-cohomology pairing takes values in $\R/\Z$, where division by $2$ is not canonically defined. The half-square must therefore be replaced by a quadratic refinement of the cup-product pairing.

For that purpose, we fix a ``Wu-cocycle" $\eta\in Z^p(X;\Q/\Z)$ in the sense of Hopkins--Singer \cite{HopkinsSinger}, together with a compatible differential-cohomological quadratic refinement
\[
Q_\eta:\check H^{p+1}(X)\longrightarrow\R/\Z
\]
satisfying
\[
Q_\eta(\check A+\check B)-Q_\eta(\check A)-Q_\eta(\check B)
=\int_X\check A\cup\check B,
\]
and
\[
\left.\frac{d}{dt}\right|_{t=0}Q_\eta(\check A+t\alpha)
=\int_X\alpha\wedge F_{\check A};
\]
see \cite[\S4 and Appendix~A]{BelovMoore}. On flat torsion fields it induces a quadratic refinement
\[
q_\eta:\Tor \left(H^{p+1}(X;\Z)\right) \longrightarrow\Q/\Z
\]
of the torsion linking pairing associated with the Wu-cocycle $\eta$ \cite[Prop.~5.66]{HopkinsSinger}.

Now set
\[
S_{\mathrm{CS},\eta}(\check A)=-ikQ_\eta(\check A).
\]
With $\epsilon_{S_{\mathrm{CS}}}=-i$, the middle operator is
\[
K_{\mathrm{CS}}=kd,
\]
and the BV complex $\sF_{\mathrm{CS}}$ is
\begin{equation*}
\Z[p+1]\longrightarrow
\Omega^0[p]\xlongrightarrow d\cdots\xlongrightarrow d
\Omega^p\xlongrightarrow{k d}
\Omega^{p+1}[-1]\xlongrightarrow d\cdots\xlongrightarrow d
\Omega^{2p+1}[-p-1].
\end{equation*}

\begin{lem}\label{lem:CS-cohomology}
The perturbative hypercohomology and connecting maps are
\[
H^j(\sF_{\mathrm{CS},\mathrm{pert}}(X))\cong H^{p+j}_{\mathrm{dR}}(X),
\qquad
\delta^j=a_j\rho^{p+j},
\qquad
a_j=\begin{cases}1,&j\le0,\\ k,&j>0.\end{cases}
\]
\end{lem}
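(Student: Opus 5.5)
The plan follows the Maxwell computation: first identify the cohomology of the perturbative complex, then compute the connecting homomorphisms $\delta^j$ by chasing a representative through the cone.

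\textbf{Perturbative cohomology.} Away from the middle arrow, $\sF_{\mathrm{CS},\mathrm{pert}}(X)$ is the de Rham complex $\Omega^\bullet(X)$ shifted by $p$. The only change is that the differential $d\colon\Omega^p\to\Omega^{p+1}$ is replaced by $kd$. Since $k\neq 0$, we have $\ker(kd)=\Omega^p_{\mathrm{cl}}(X)$ and $\operatorname{im}(kd)=\Omega^{p+1}_{\mathrm{ex}}(X)$. Hence in every degree $j$ the cohomology is $H^{p+j}_{\mathrm{dR}}(X)$. The cleanest way to say this is through the isomorphism of complexes
\[
\phi\colon \Omega^\bullet[p]\longrightarrow \sF_{\mathrm{CS},\mathrm{pert}}(X),
\]
which is the identity in form degrees $\le p$ and multiplication by $k$ in form degrees $\ge p+1$. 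It commutes with the differentials because $k\,d = (kd)\circ\mathrm{id}$ at the middle arrow and $d\circ k = k\circ d$ above it. We fix the identification $H^j(\sF_{\mathrm{CS},\mathrm{pert}}(X))\cong H^{p+j}_{\mathrm{dR}}(X)$ to be the one induced by $\phi^{-1}$. In terms of harmonic representatives, a class in degree $j>0$ given by a form $\beta$ is identified with $[\beta/k]$.

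\textbf{Connecting maps.} The connecting map for $0\to\sF_{\mathrm{pert}}\to\sF\to\Z[p+1]\to0$ is computed in Čech--de Rham hypercohomology. Start with a Čech cocycle $c$ representing a class in $H^{p+j}(X;\Z)$. Lift it to $\sF$ via the inclusion $\Z\to\Omega^0$ and apply the total differential. Below the middle arrow the result is exactly the zig-zag defining the integral-to-de Rham map. When the zig-zag passes the arrow $\Omega^p\to\Omega^{p+1}$, the component is multiplied by $kd$ instead of $d$.

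Two cases follow. For $j\le 0$ the resulting global form lives in form degree $p+j\le p$, so the zig-zag never reaches the modified arrow. The class obtained is $\rho^{p+j}(c)$ under our identification, so $a_j=1$.

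For $j>0$ the zig-zag crosses the middle arrow exactly once, and the global representative is $k$ times the usual de Rham representative of $\rho^{p+j}(c)$. Under $\phi^{-1}$ this has to be compared with the de Rham side. A cleaner formulation is via the commutative diagram of short exact sequences
\[
0\to\Omega^\bullet[p]\to\mathrm{Cone}\to\Z[p+1]\to0,
\]
in which the map on the $\Z$ term is the identity. To make this square commute, the map $\Omega^\bullet[p]\to\sF_{\mathrm{pert}}$ must carry an additional factor of $k$ in the relevant degrees. This should give $\delta^j=k\rho^{p+j}$ for $j>0$. This is not contradicted by the Maxwell case: there $K=d\star d$ has exact image, which is why the upper connecting maps vanish, whereas here $K=kd$ preserves the cohomology class up to the factor $k$. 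As a sanity check, $j=1$ should reproduce $\sE_S(\check A)=kF_{\check A}$, whose class is $k\rho^{p+1}(c(\check A))$.

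\textbf{Main obstacle.} The delicate step is bookkeeping: tracking the sign and exactly where the factor $k$ sits in the connecting map for $j>0$, relative to the chosen identification. A different normalization of $\phi$ would move the $k$ into the lower degrees. I would check this with the explicit $j=1$ computation via curvature and then propagate it upward by naturality along $d$.
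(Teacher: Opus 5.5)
Your route is the paper's: compare $\sF_{\mathrm{CS},\mathrm{pert}}(X)$ with the de Rham complex, and observe that the zig-zag for $\delta^j$ picks up one factor of $k$ when it crosses the middle arrow. The gap is in the identification you fix.

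You identify $H^j(\sF_{\mathrm{CS},\mathrm{pert}}(X))\cong H^{p+j}_{\mathrm{dR}}(X)$ through $\phi^{-1}$, i.e.\ $[\beta]\mapsto[\beta/k]$ for $j>0$. Your zig-zag produces $k$ times a de Rham representative of $\rho^{p+j}(c)$, and $\phi^{-1}$ divides that $k$ back out. So in your own convention $\delta^j=\rho^{p+j}$ for \emph{every} $j$, not $k\rho^{p+j}$.

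Your diagram argument says the same thing. Extending $\phi$ to the cones by the identity on $\Z$ gives a morphism of short exact sequences, and naturality yields $\delta_{\sF}=H(\phi)\circ\rho$. Composing with your identification $H(\phi)^{-1}$ then gives $\delta^j=\rho^{p+j}$. So the sentence ``this should give $\delta^j=k\rho^{p+j}$'' contradicts your setup rather than following from it. Your $j=1$ sanity check fails there too, since $\phi^{-1}$ sends $[kF_{\check A}]$ to $[F_{\check A}]=\rho^{p+1}(c(\check A))$.

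What is missing is that the identification is not a free normalization. Because $k\neq0$, we have $\ker(kd)=\Omega^p_{\mathrm{cl}}(X)$ and $\operatorname{im}(kd)=\Omega^{p+1}_{\mathrm{ex}}(X)$. Hence $H^j(\sF_{\mathrm{CS},\mathrm{pert}}(X))$ is literally the quotient $\Omega^{p+j}_{\mathrm{cl}}(X)/\Omega^{p+j}_{\mathrm{ex}}(X)$, and its $D$-harmonic representatives are the ordinary harmonic forms $\mathcal H^{p+j}$.

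This literal identification is the one in which $\omega$ and $\kappa$ are the Poincar\'e pairing and the $L^2$ metric on harmonic forms. It is what the paper uses immediately afterwards to read off $\Gamma^j_{\mathrm{CS}}=a_j\Lambda^{p+j}$ and to conclude unimodularity exactly for $k=\pm1$. If you transported through $\phi^{-1}$ instead, the lattice would look like $\Lambda^{p+j}$ in every degree, but $\omega$ would become $k$ times the Poincar\'e pairing. The $k$ does not disappear; it just moves.

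With the literal identification, $H^j(\phi)$ is the identity for $j\le0$ and multiplication by $k$ for $j>0$. Your naturality argument then gives exactly $\delta^j=a_j\rho^{p+j}$, which is a more detailed version of the paper's two-line proof. So keep $\phi$ as the comparison map, but do not use it to define the isomorphism.
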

\begin{proof}
The perturbative complex differs from the de Rham complex only by multiplying its middle differential by the nonzero scalar $k$, so its cohomology is unchanged. The connecting map is the integral-to-de Rham map before the middle arrow and acquires one factor of $k$ after crossing it.
\end{proof}

Lemma~\ref{lem:master-hypercohomology} therefore gives
\begin{equation}\label{eq:CS-H}
0\longrightarrow
\frac{\mathcal H^{p+j}}{a_j\Lambda^{p+j}}
\longrightarrow
\Hq^j(X;\sF_{\mathrm{CS}})
\longrightarrow
\Tor \left( H^{p+j+1}(X;\Z)\right)
\longrightarrow0.
\end{equation}
In particular,
\[
\pi_0(\Hq^0(X;\sF_{\mathrm{CS}}))\cong\Tor \left( H^{p+1}(X;\Z) \right),
\]
so only flat torsion sectors survive the equation of motion.

Set
\[
H_{\mathrm{CS}}^j\coloneq\mathbb H^j(X;\sF_{\mathrm{CS}}),
\qquad H_{\mathrm{CS}}\coloneq\{H_{\mathrm{CS}}^j\}_{j\in\Z}.
\]
Its Lie algebra is $\fh_{\mathrm{CS}}^j\cong\mathcal H^{p+j}$, with the $L^2$ metric and Poincar\'e pairing as above, while
\[
\Gamma_{\mathrm{CS}}^j=a_j\Lambda^{p+j}.
\]
$\Gamma_{\mathrm{CS}}$ is full and integral, and is unimodular exactly when $k = \pm 1$. The restriction of $S_{\mathrm{CS},\eta}$ to flat torsion characters gives 
\[
w_{\mathrm{CS}}(t)=-ikq_\eta(t).
\]
The preceding discussion yields the following.
\begin{prop}
	$H_{\mathrm{CS}}$ is integrable, and is nice precisely for $k=\pm1$.
\end{prop}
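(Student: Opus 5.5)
We verify the definitions in turn: integrality, coisotropy, reduced unimodularity, and positivity of the weight. The input is the description of $H_{\mathrm{CS}}$ from Lemma~\ref{lem:CS-cohomology} and \eqref{eq:CS-H}:
\[
\fh^j_{\mathrm{CS}}\cong\sH^{p+j},
\qquad
\Gamma^j_{\mathrm{CS}}=a_j\Lambda^{p+j},
\qquad
\pi_0(H^0_{\mathrm{CS}})\cong\Tor\bigl(H^{p+1}(X;\Z)\bigr).
\]

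First, integrality. The pairing $\omega$ pairs degree $j$ with degree $1-j$, i.e.\ $\sH^{p+j}$ with $\sH^{p+1-j}=\sH^{n-(p+j)}$, via $\pm\int_X$. Exactly one of $j$ and $1-j$ is positive, so
\[
\omega\bigl(\Gamma^j,\Gamma^{1-j}\bigr)=\pm k\int_X\Lambda^{p+j}\wedge\Lambda^{n-p-j}.
\]
Wedge products of harmonic representatives of integral classes integrate to integers, since they are cup products evaluated on the fundamental class. Hence $\omega(\Gamma,\Gamma)\subset k\Z\subset\Z$.

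Second, coisotropy. Each $\Lambda^r$ is full in $\sH^r$ because $H^r(X;\Z)\otimes\R\cong H^r_{\mathrm{dR}}(X)$. So $\Gamma_\R=\fh$, which is trivially coisotropic since $\fh^{\perp_\omega}=0$ by nondegeneracy. Thus $H_{\mathrm{CS}}$ is integrable, $\rad(\Gamma)=0$, and $\underline\Gamma=\Gamma$.

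Third, reduced unimodularity, which is the main point. By Poincaré duality over $\Z$ modulo torsion, the pairing $\Lambda^r\times\Lambda^{n-r}\to\Z$ is perfect. Rescaling one factor by $k$ gives
\[
\underline\omega^\flat\colon\Gamma^j\longrightarrow(\Gamma^{1-j})^\vee,
\]
which is $k$ times an isomorphism of lattices. Its cokernel is therefore $(\Z/k)^{\operatorname{rank}}$. This is an isomorphism exactly when $k=\pm1$, provided some $\Lambda^{p+j}$ with $j>0$ is nonzero. This holds because $H^n(X;\Z)\cong\Z$ (for $j=p+1$) is always nonzero, $X$ being closed and oriented.

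Finally, the weight condition. The group $\pi_0(H^0)$ is finite, so its free quotient is $0$ and positive-definiteness holds vacuously. Equivalently, $w_{\mathrm{CS}}=-ikq_\eta$ is purely imaginary, so $\Re(w)=0$, and the zero form on the zero group is positive definite. Hence $H_{\mathrm{CS}}$ is nice if and only if $\Gamma$ is reduced unimodular, i.e.\ if and only if $k=\pm1$.

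The only delicate step is the unimodularity argument. One must check that the signs and the degree bookkeeping in $\omega$ really put exactly one factor of $k$ on each pairing, and one should invoke integral Poincaré duality modulo torsion for the lattices $\Lambda^r$.
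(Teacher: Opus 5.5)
Your proposal is correct and follows the same route as the paper, whose terse argument is just the preceding discussion. That argument runs as follows: $\Gamma_{\mathrm{CS}}^j=a_j\Lambda^{p+j}$ is full, so it is coisotropic with trivial radical; integral Poincar\'e duality modulo torsion gives integrality and shows $\underline\omega^\flat$ is $k$ times an isomorphism; and $\pi_0(H^0_{\mathrm{CS}})\cong\Tor(H^{p+1}(X;\Z))$ is finite, which makes the positivity condition vacuous. You supply the details the paper leaves implicit, including the check that some $\Lambda^{p+j}$ is nonzero so that unimodularity really forces $k=\pm1$ (for disconnected $X$, $H^n(X;\Z)$ is $\Z^{\pi_0(X)}$ rather than $\Z$, but nonzero is all you use).
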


For $k=\pm1$, Proposition~\ref{prop:partition-factorization} and Lemma~\ref{lem:unimodular-volume} give
\begin{equation}\label{eq:CS-partition}
Z_{H_{\mathrm{CS}}}(L)
=
\left(\sum_{t\in\Tor \left(H^{p+1}(X;\Z)\right)}e^{2\pi i k q_\eta(t)}\right)
\prod_{\substack{-p\leq r\leq p+1\\r\neq0}}
\tau_{r+p+1}^{(-1)^r}
\prod_{r=0}^{n}\vol(\sH^r/\Lambda^r)^{(-1)^{r-p}/2}.
\end{equation}
Again, the first two factors are respectively $\Theta(L)$ and $T(H_{\mathrm{CS}})$, while the last follows from
$\vol(L_0)=({\det}_{\Gamma_{\mathrm{CS}}}\kappa_{\mathrm{CS}})^{1/4}$ and Poincar\'e duality. The result is independent of the maximal admissible Lagrangian $L$.

\subsection{Maxwell--Chern--Simons theory}

Assume again that $p=2\ell+1$ and $n=2p+1$, with $e>0$ and $k\in\Z\setminus\{0\}$. We consider the Maxwell--Chern--Simons action on the Riemannian manifold $X$, with a purely imaginary Maxwell term:
\[
S_{\mathrm{MCS}}(\check A)
=-ikQ_\eta(\check A)-\frac{i}{2e^2}\int_XF_{\check A}\wedge\star F_{\check A}.
\]
With $\epsilon_{S_\mathrm{MCS}}=-i$, the middle operator is
\[
K_{\mathrm{MCS}}=d(e^{-2}\star d+k),
\]
and the BV complex $\sF_{\mathrm{MCS}}$ is
\begin{equation*}
	\Z[p+1]\longrightarrow
	\Omega^0[p]\xlongrightarrow d\cdots\xlongrightarrow d
	\Omega^p\xlongrightarrow{k d + e^{-2} d \star d}
	\Omega^{p+1}[-1]\xlongrightarrow d\cdots\xlongrightarrow d
	\Omega^{2p+1}[-p-1].
\end{equation*}
Set
\[
R_{e,k}=\ker\bigl(e^{-2}\star d+k:\Omega^p_{\mathrm{coex}}(X)\to\Omega^p_{\mathrm{coex}}(X)\bigr).
\]
Note that $R_{e,k}$ is finite-dimensional; it is nontrivial precisely when $-e^2k$ lies in the spectrum of $\star d$ on $\Omega^p_{\mathrm{coex}}(X)$.

\begin{lem}\label{lem:MCS-cohomology}
The perturbative cohomology agrees with that of Chern--Simons theory away from the middle degrees, while
\[
H^0(\sF_{\mathrm{MCS},\mathrm{pert}}(X))
\cong H^p_{\mathrm{dR}}(X)\oplus R_{e,k},
\qquad
H^1(\sF_{\mathrm{MCS},\mathrm{pert}}(X))
\cong H^{p+1}_{\mathrm{dR}}(X)\oplus dR_{e,k}.
\]
\end{lem}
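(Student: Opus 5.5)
Away from the middle arrow the perturbative complex $\sF_{\mathrm{MCS},\mathrm{pert}}(X)$ coincides with the de Rham complex (the leftmost arrow $\Z[p+1]\to\Omega^0[p]$ having been deleted). For $j<0$ and $j>1$ the cohomology is therefore de Rham cohomology, exactly as in Lemma~\ref{lem:CS-cohomology}. It remains to compute
\[
H^0=\ker K_{\mathrm{MCS}}/\Omega^p_{\mathrm{ex}}(X)
\qquad\text{and}\qquad
H^1=\Omega^{p+1}_{\mathrm{cl}}(X)/\operatorname{im}K_{\mathrm{MCS}}.
\]

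I will use the Hodge decomposition $\Omega^p=\sH^p\oplus\Omega^p_{\mathrm{ex}}\oplus\Omega^p_{\mathrm{coex}}$, and similarly in degree $p+1$. Since $n=2p+1$, the operator $\star d$ maps $\Omega^p$ to $\Omega^p$. It kills $\sH^p\oplus\Omega^p_{\mathrm{ex}}$ and restricts to an isomorphism on $\Omega^p_{\mathrm{coex}}$: its image lies in $\Omega^p_{\mathrm{coex}}$ because $\star d\alpha=\pm d^*\star\alpha$, and it is injective there because $d$ is injective on coexact forms. Writing $K_{\mathrm{MCS}}=d\circ P$ with $P=e^{-2}\star d+k$, the operator $P$ preserves each Hodge summand. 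It acts by $k\neq0$ on $\sH^p\oplus\Omega^p_{\mathrm{ex}}$, which $d$ kills, and as $P_c=e^{-2}\star d+k$ on $\Omega^p_{\mathrm{coex}}$, where $d$ is injective. Hence
\[
\ker K_{\mathrm{MCS}}=\sH^p\oplus\Omega^p_{\mathrm{ex}}\oplus R_{e,k},
\]
and quotienting by exact forms gives $H^0\cong H^p_{\mathrm{dR}}(X)\oplus R_{e,k}$.

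For $H^1$, we have $\operatorname{im}K_{\mathrm{MCS}}=d\bigl(P_c(\Omega^p_{\mathrm{coex}})\bigr)$. Here $P_c$ is $e^{-2}\star d$ plus a constant, and $\star d$ is a self-adjoint elliptic first-order operator on $\Omega^p_{\mathrm{coex}}$ (self-adjointness for odd $p$ being the content of Lemma~\ref{selfdual}). So $P_c$ is self-adjoint Fredholm with discrete spectrum, and $\operatorname{im}P_c=R_{e,k}^{\perp}\cap\Omega^p_{\mathrm{coex}}$, giving the decomposition $\Omega^p_{\mathrm{coex}}=\operatorname{im}P_c\oplus R_{e,k}$. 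Applying the isomorphism $d\colon\Omega^p_{\mathrm{coex}}\xrightarrow{\sim}\Omega^{p+1}_{\mathrm{ex}}$ yields $\Omega^{p+1}_{\mathrm{ex}}=\operatorname{im}K_{\mathrm{MCS}}\oplus dR_{e,k}$. Combined with $\Omega^{p+1}_{\mathrm{cl}}=\sH^{p+1}\oplus\Omega^{p+1}_{\mathrm{ex}}$, this gives $H^1\cong H^{p+1}_{\mathrm{dR}}(X)\oplus dR_{e,k}$. Finite-dimensionality of $R_{e,k}$ follows from the same discrete-spectrum statement, or directly from ellipticity of the complex.

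The main obstacle is the closed-range and self-adjointness claim for $P_c$ on the infinite-dimensional space $\Omega^p_{\mathrm{coex}}$ of smooth forms. I plan to get it from the ellipticity of $\star d+d\star$ on $\Omega^p\oplus\Omega^{p+1}$, or equivalently from the assumed ellipticity of $\sF_{\mathrm{pert}}$ and its Hodge decomposition $\operatorname{im}D\oplus\operatorname{im}D^*\oplus H_D$ restricted to degree $0$ and $1$. The cokernel of $K_{\mathrm{MCS}}$ in $\Omega^{p+1}_{\mathrm{cl}}$ is then identified with the $D$-harmonic forms there, and one checks that these are $\sH^{p+1}\oplus dR_{e,k}$ using the relation $D^*=\pm\star D\star$ from the proof of Lemma~\ref{star}.
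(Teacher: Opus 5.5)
Your proof is correct and follows the paper's argument. Both Hodge-decompose $\Omega^p$ and observe that $K_{\mathrm{MCS}}$ kills $\sH^p\oplus\Omega^p_{\mathrm{ex}}$ and equals $d\circ(e^{-2}\star d+k)$ on $\Omega^p_{\mathrm{coex}}$; injectivity of $d$ there gives $H^0$, and self-adjointness gives $\Omega^p_{\mathrm{coex}}=R_{e,k}\oplus\operatorname{im}(e^{-2}\star d+k)$, to which one applies $d$ to get $H^1$.

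Your concern about closed range is a detail the paper leaves implicit. Note, however, that $\star d+d\star$ does not preserve $\Omega^p\oplus\Omega^{p+1}$, so it cannot serve as the elliptic operator. Take the closed-range input from the assumed ellipticity of $\sF_{\mathrm{pert}}(X)$ instead, which was your second option. That route works and gives $H^1_D=\sH^{p+1}\oplus\star R_{e,k}$, and this equals $\sH^{p+1}\oplus dR_{e,k}$ because $dc=-e^2k\star c$ for $c\in R_{e,k}$.
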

\begin{proof}
Under the Hodge decomposition
$\Omega^p=\mathcal H^p\oplus\Omega^p_{\mathrm{ex}}\oplus\Omega^p_{\mathrm{coex}}$,
the middle differential vanishes on the first two summands and equals $d(e^{-2}\star d+k)$ on the last. Since $d$ is injective on coexact forms,
\[
\ker K_{\mathrm{MCS}}=\Omega^p_{\mathrm{cl}}\oplus R_{e,k}.
\]
Self-adjointness gives $\Omega^p_{\mathrm{coex}}=R_{e,k}\oplus\operatorname{im}(e^{-2}\star d+k)$; applying $d$ and taking the quotient in degree one gives the second formula.
\end{proof}

Set
\[
H_{\mathrm{MCS}}^j\coloneq\mathbb H^j(X;\sF_{\mathrm{MCS}}),
\qquad H_{\mathrm{MCS}}\coloneq\{H_{\mathrm{MCS}}^j\}_{j\in\Z}.
\]
If $R_{e,k}=0$, then the perturbative complex $\sF_{\mathrm{MCS},\mathrm{pert}}$ is quasi-isomorphic to the Chern--Simons perturbative complex $\sF_{\mathrm{CS},\mathrm{pert}}$. The connecting maps also agree under the harmonic identifications, since the Maxwell contribution $e^{-2}d\star F$ is exact. The full hypercohomology groups therefore fit into the exact sequences in Section~\ref{sec: CS}, with the same Lie algebras, exponential lattices, shifted symplectic pairings, and inner products. In particular,
\[
\pi_0(\Hq^0(X;\sF_{\mathrm{MCS}}))
\cong\Tor(H^{p+1}(X;\Z)).
\]
In the absence of resonance, critical fields in torsion sectors are flat, so the Maxwell term vanishes and the degree-zero weight equals the Chern--Simons weight. 
\begin{prop}
	If $R_{e,k}=0$, then
	\begin{equation*}
		Z_{H_{\mathrm{MCS}}}(L)=Z_{H_{\mathrm{CS}}}(L),
	\end{equation*}
	with the right-hand side given by \eqref{eq:CS-partition}.
\end{prop}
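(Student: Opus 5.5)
The plan is to show that, when $R_{e,k}=0$, the abelian BV groups $H_{\mathrm{MCS}}$ and $H_{\mathrm{CS}}$ are isomorphic as abelian BV groups, compatibly with all the data entering \eqref{eq:BV-partition}. Once this is in place, maximal admissible Lagrangians correspond bijectively. Proposition~\ref{prop:partition-factorization} then gives equal factorizations $\vol(L_0)\,T(H)\,\Theta(L)$, and Theorem~\ref{thm:L-independence} together with \eqref{eq:CS-partition} identifies the common value.

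First, use Lemma~\ref{lem:MCS-cohomology} with $R_{e,k}=0$. This gives $H^j(\sF_{\mathrm{MCS},\mathrm{pert}}(X))\cong\sH^{p+j}$ for all $j$, identified through harmonic representatives exactly as for Chern--Simons. In the middle degrees, note that $\ker K_{\mathrm{MCS}}=\Omega^p_{\mathrm{cl}}$, and that $\operatorname{im}K_{\mathrm{MCS}}=d\,\Omega^p_{\mathrm{coex}}=\Omega^{p+1}_{\mathrm{ex}}$ because $e^{-2}\star d+k$ is invertible on coexact forms. The Laplacian-type harmonic space $H_D$ for the MCS complex coincides with the de Rham harmonic forms in each degree. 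Hence $\omega$ and $\kappa$, which are defined via harmonic representatives and the Poincar\'e and $L^2$ pairings, agree with $\omega_{\mathrm{CS}}$ and $\kappa_{\mathrm{CS}}$.

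Next, compare the connecting maps $\delta^j$. For $j\le0$, both are $\rho^{p+j}$, since the complexes agree there. For $j>0$, an integral class is lifted to a differential character $\check A$ with curvature $F$. The MCS connecting map sends it to the class of $kF+e^{-2}d\star F$. The second term is exact, so $\delta^j_{\mathrm{MCS}}=k\rho^{p+j}=\delta^j_{\mathrm{CS}}$. Lemma~\ref{lem:master-hypercohomology} then yields the same exponential lattices $\Gamma^j=a_j\Lambda^{p+j}$ and the same component groups. In particular, $\pi_0(H^0_{\mathrm{MCS}})\cong\Tor H^{p+1}(X;\Z)$, so the torsion factor $T(H)$ coincides.

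Finally, compare the weights on $\pi_0(H^0)$, which I expect to be the only genuine point to check. Every component of $H^0_{\mathrm{MCS}}$ contains a flat character: given a torsion class $t$, pick $\check A$ with $c(\check A)=t$ and $F_{\check A}$ exact. Since $S_{\mathrm{crit}}$ is constant on components, we may evaluate it on a critical representative. Flatness is forced for critical fields in torsion sectors: write $F=d\beta$ with $\beta$ coexact. The equation of motion gives $d(e^{-2}\star d+k)\beta=0$, and the absence of resonance then yields $\beta=0$, so $F=0$. On such a representative the Maxwell term vanishes and $S_{\mathrm{MCS}}=-ikQ_\eta=-ikq_\eta(t)$, which is $w_{\mathrm{CS}}(t)$.

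Thus the two BV structures agree. Niceness, admissibility, and maximality are therefore preserved under the identification, and the partition functions coincide. For $k=\pm1$ the value is \eqref{eq:CS-partition}.
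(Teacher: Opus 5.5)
Your proposal is correct and follows the same route as the paper. With $R_{e,k}=0$, you identify the perturbative cohomology and harmonic data (hence $\omega$ and $\kappa$) with those of Chern--Simons, use exactness of the Maxwell contribution $e^{-2}d\star F$ to match the connecting maps, lattices and component groups, and use flatness of critical fields in torsion sectors to match the degree-zero weights; beyond this you only supply details the paper leaves implicit, namely invertibility of $e^{-2}\star d+k$ on coexact forms and the explicit flatness argument via injectivity of $d$ on coexact forms.
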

This equality concerns only the zero-mode contribution. In dimension $3$, a factorization of the full partition function is derived in \cite[Eq.~(14)]{Armoni2023}, in the conventions of that paper:
\[
Z_{\mathrm{MCS}}=Z_{\mathrm{DJ}}Z_{\mathrm{CS}},
\]
where $Z_{\mathrm{DJ}}$ is the massive Deser--Jackiw sector \cite{DeserJackiw} and $Z_{\mathrm{CS}}$ is the topological Chern--Simons sector. In the present model, $R_{e,k}=0$ means that there are no additional zero modes. If $R_{e,k}\neq0$, resonant zero modes appear in the massive Deser--Jackiw sector without corresponding exponential lattice directions. Consequently, $H_{\mathrm{MCS}}$ is not integrable in the sense of Section~2. We expect a density-valued extension of the zero-mode partition function associated with the resonant zero modes; its construction is beyond the scope of this paper.

\subsection{Further examples}

One may replace the gauge group $\U(1)$ by an arbitrary compact connected abelian group and allow collections of higher-form gauge fields of different degrees with quadratic couplings between them. This generalization includes, for example, abelian $BF$ theory and can be treated within a natural extension of the present framework.

\renewcommand{\refname}{References}


\begin{bibdiv}
	\begin{biblist}
		
		\bib{Armoni2023}{article}{
			author={Armoni, Adi},
			title={$S$-dual of Maxwell--Chern--Simons theory},
			journal={Phys. Rev. Lett.},
			volume={130},
			date={2023},
			number={14},
			pages={141601},
		}
		
		\bib{BeckerBeniniSchenkelSzabo}{article}{
			author={Becker, Christian},
			author={Benini, Marco},
			author={Schenkel, Alexander},
			author={Szabo, Richard J.},
			title={Abelian duality on globally hyperbolic spacetimes},
			journal={Comm. Math. Phys.},
			volume={349},
			date={2017},
			number={1},
			pages={361--392},
		}
		
		\bib{BelovMoore}{misc}{
			author={Belov, Dmitriy M.},
			author={Moore, Gregory W.},
			title={Holographic action for the self-dual field},
			date={2006},
			eprint={hep-th/0605038},
		}
		
		\bib{Brylinski}{book}{
			author={Brylinski, Jean-Luc},
			title={Loop Spaces, Characteristic Classes and Geometric Quantization},
			series={Progress in Mathematics},
			volume={107},
			publisher={Birkh\"auser Boston},
			place={Boston, MA},
			date={1993},
		}
		
		\bib{CheegerSimons}{article}{
			author={Cheeger, Jeff},
			author={Simons, James},
			title={Differential characters and geometric invariants},
			date={1985},
			conference={
				title={Geometry and Topology},
				address={College Park, MD},
				date={1983/1984},
			},
			book={
				series={Lecture Notes in Mathematics},
				volume={1167},
				publisher={Springer},
				place={Berlin},
			},
			pages={50--80},
		}
		
		\bib{DeserJackiw}{article}{
			author={Deser, Stanley},
			author={Jackiw, Roman},
			title={``Self-duality'' of topologically massive gauge theories},
			journal={Phys. Lett. B},
			volume={139},
			date={1984},
			number={5--6},
			pages={371--373},
		}
		
		\bib{DonnellyMichelWall}{article}{
			author={Donnelly, William},
			author={Michel, Ben},
			author={Wall, Aron C.},
			title={Electromagnetic duality and entanglement anomalies},
			journal={Phys. Rev. D},
			volume={96},
			date={2017},
			number={4},
			pages={045008},
		}
		
		\bib{ElliottGwilliamSaberiWilliams}{misc}{
			author={Elliott, Chris},
			author={Gwilliam, Owen},
			author={Saberi, Ingmar},
			author={Williams, Brian R.},
			title={Duality of generalized Maxwell theories as an equivalence
				in derived geometry},
			date={2026},
			eprint={2603.19161},
		}
		
		\bib{FreedMooreSegal}{article}{
			author={Freed, Daniel S.},
			author={Moore, Gregory W.},
			author={Segal, Graeme},
			title={Heisenberg groups and noncommutative fluxes},
			journal={Ann. Physics},
			volume={322},
			date={2007},
			number={1},
			pages={236--285},
		}
		
		\bib{HopkinsSinger}{article}{
			author={Hopkins, Michael J.},
			author={Singer, Isadore M.},
			title={Quadratic functions in geometry, topology, and $M$-theory},
			journal={J. Differential Geom.},
			volume={70},
			date={2005},
			number={3},
			pages={329--452},
		}
		
		\bib{JiangLiZabzine}{misc}{
			author={Jiang, Shuhan},
			author={Li, Si},
			author={Zabzine, Maxim},
			title={Abelian duality in the {BV} formalism},
			note={In preparation},
		}
		
		\bib{MooreSaxena}{article}{
			author={Moore, Gregory W.},
			author={Saxena, Vivek},
			title={{TASI} lectures on topological field theories and
				differential cohomology},
			date={2025},
			note={With an appendix by Daniel S. Freed},
			eprint={2510.07408},
		}
		
		\bib{RaySinger1971}{article}{
			author={Ray, Daniel B.},
			author={Singer, Isadore M.},
			title={$R$-torsion and the Laplacian on Riemannian manifolds},
			journal={Adv. Math.},
			volume={7},
			date={1971},
			number={2},
			pages={145--210},
		}
		
		\bib{Schwarz1978}{article}{
			author={Schwarz, Albert S.},
			title={The partition function of degenerate quadratic functional and {Ray--Singer} invariants},
			journal={Lett. Math. Phys.},
			volume={2},
			date={1978},
			number={3},
			pages={247--252},
		}
		
		\bib{Szabo2012}{article}{
			author={Szabo, Richard J.},
			title={Quantization of higher abelian gauge theory in generalized differential cohomology},
			journal={PoS},
			volume={Rio de Janeiro 2012},
			date={2012},
			pages={009},
			eprint={1209.2530},
		}
		
	\end{biblist}
\end{bibdiv}
\end{document}